\let\coloneqq\relax
\newcolumntype{x}[1]{>{\centering\arraybackslash}p{#1}}
\newtheorem{thm}{Theorem}
\newtheorem*{thm*}{Theorem}
\newtheorem{prop}[thm]{Proposition}
\newtheorem*{prop*}{Proposition}
\newtheorem{lemma}[thm]{Lemma}
\newtheorem*{lemma*}{Lemma}
\newtheorem*{cor*}{Corollary}
\newtheorem{cj}[thm]{Conjecture}
\newtheorem*{cj*}{Conjecture}
\newtheorem{Def}[thm]{Definition}
\newtheorem*{Def*}{Definition}
\newtheorem*{question*}{Question}
\newtheorem*{problem*}{Problem}
\def\thmhead@plain#1#2#3{%
  \thmname{#1}\thmnumber{\@ifnotempty{#1}{ }\@upn{#2}}%
  \thmnote{ {\the\thm@notefont#3}}}
\let\thmhead\thmhead@plain
\theoremstyle{definition}
\newtheorem{rem}[thm]{Remark}
\newtheorem*{note}{Note}
\newcommand{\bb}{\begin{equation}\begin{aligned}\hspace{0pt}}
\newcommand{\bbb}{\begin{equation*}\begin{aligned}}
\newcommand{\ee}{\end{aligned}\end{equation}}
\newcommand{\eee}{\end{aligned}\end{equation*}}
\newcommand*{\coloneqq}{\mathrel{\vcenter{\baselineskip0.5ex \lineskiplimit0pt \hbox{\scriptsize.}\hbox{\scriptsize.}}} =}
\newcommand{\eqt}[1]{\stackrel{\mathclap{\scriptsize \mbox{#1}}}{=}}
\newcommand{\leqt}[1]{\stackrel{\mathclap{\scriptsize \mbox{#1}}}{\leq}}
\newcommand{\ketbra}[1]{\ket{#1}\!\!\bra{#1}}
\newcommand{\ketbraa}[2]{\ket{#1}\!\!\bra{#2}}
\newcommand{\sumno}{\sum\nolimits}
\newcommand{\e}{\varepsilon}
\renewcommand{\epsilon}{\varepsilon}
\newcommand{\id}{\mathds{1}}
\newcommand{\R}{\mathds{R}}
\newcommand{\N}{\mathds{N}}
\newcommand{\C}{\mathds{C}}
\newcommand{\locc}{\mathrm{LOCC}}
\newcommand{\sep}{\mathrm{SEP}}
\newcommand{\ppt}{\mathrm{PPT}}
\DeclareMathOperator{\Tr}{Tr}
\DeclareMathAlphabet{\pazocal}{OMS}{zplm}{m}{n}
\newcommand{\TT}{\pazocal{T}}
\newcommand{\lsmatrix}{\left(\begin{smallmatrix}}
\newcommand{\rsmatrix}{\end{smallmatrix}\right)}
\newcommand*\rel@kern[1]{\kern#1\dimexpr\macc@kerna}
\newcommand*\widebar[1]{%
  \begingroup
  \def\mathaccent##1##2{%
    \rel@kern{0.8}%
    \overline{\rel@kern{-0.8}\macc@nucleus\rel@kern{0.2}}%
    \rel@kern{-0.2}%
  }%
  \macc@depth\@ne
  \let\math@bgroup\@empty \let\math@egroup\macc@set@skewchar
  \mathsurround\z@ \frozen@everymath{\mathgroup\macc@group\relax}%
  \macc@set@skewchar\relax
  \let\mathaccentV\macc@nested@a
  \macc@nested@a\relax111{#1}%
  \endgroup
}
\tikzset{meter/.append style={draw, inner sep=10, rectangle, font=\vphantom{A}, minimum width=30, line width=.8, path picture={\draw[black] ([shift={(.1,.3)}]path picture bounding box.south west) to[bend left=50] ([shift={(-.1,.3)}]path picture bounding box.south east);\draw[black,-latex] ([shift={(0,.1)}]path picture bounding box.south) -- ([shift={(.3,-.1)}]path picture bounding box.north);}}}
\tikzset{roundnode/.append style={circle, draw=black, fill=gray!20, thick, minimum size=10mm}}
\tikzset{squarenode/.style={rectangle, draw=black, fill=none, thick, minimum size=10mm}}
\definecolor{Blues5seq1}{RGB}{239,243,255}
\definecolor{Blues5seq2}{RGB}{189,215,231}
\definecolor{Blues5seq3}{RGB}{107,174,214}
\definecolor{Blues5seq4}{RGB}{49,130,189}
\definecolor{Blues5seq5}{RGB}{8,81,156}
\definecolor{Greens5seq1}{RGB}{237,248,233}
\definecolor{Greens5seq2}{RGB}{186,228,179}
\definecolor{Greens5seq3}{RGB}{116,196,118}
\definecolor{Greens5seq4}{RGB}{49,163,84}
\definecolor{Greens5seq5}{RGB}{0,109,44}
\definecolor{Reds5seq1}{RGB}{254,229,217}
\definecolor{Reds5seq2}{RGB}{252,174,145}
\definecolor{Reds5seq3}{RGB}{251,106,74}
\definecolor{Reds5seq4}{RGB}{222,45,38}
\definecolor{Reds5seq5}{RGB}{165,15,21}
\newcommand{\GG}{\pazocal{G}}
\begin{document}

\title{Optimising quantum data hiding}

\author{Francesco Anna Mele}
\email{francesco.mele@sns.it}
\affiliation{NEST, Scuola Normale Superiore and Istituto Nanoscienze, Piazza dei Cavalieri 7, IT-56126 Pisa, Italy}

\author{Ludovico Lami}
\email{ludovico.lami@gmail.com}
\affiliation{Scuola Normale Superiore, Piazza dei Cavalieri 7, 56126 Pisa, Italy}
\affiliation{QuSoft, Science Park 123, 1098 XG Amsterdam, The Netherlands}
\affiliation{Korteweg--de Vries Institute for Mathematics, University of Amsterdam, Science Park 105-107, 1098 XG Amsterdam, The Netherlands}
\affiliation{Institute for Theoretical Physics, University of Amsterdam, Science Park 904, 1098 XH Amsterdam, The Netherlands}

\begin{abstract}
Quantum data hiding is the existence of pairs of bipartite quantum states that are (almost) perfectly distinguishable with global measurements, yet close to indistinguishable when only measurements implementable with local operations and classical communication are allowed. Remarkably, data hiding states can also be chosen to be separable, meaning that secrets can be hidden using no entanglement that are almost irretrievable without entanglement --- this is sometimes called `nonlocality without entanglement'. Essentially two families of data hiding states were known prior to this work: Werner states and random states. Hiding Werner states can be made either separable or globally \emph{perfectly} orthogonal, but not both --- separability comes at the price of orthogonality being only approximate. Random states can hide many more bits, but they are typically entangled and again only approximately orthogonal. In this paper, we present an explicit construction of novel group-symmetric data hiding states that are simultaneously separable, perfectly orthogonal, and even invariant under partial transpose, thus exhibiting the phenomenon of nonlocality without entanglement to the utmost extent. Our analysis leverages novel applications of numerical analysis tools to study convex optimisation problems in quantum information theory, potentially offering technical insights that extend beyond this work.


%
\end{abstract}

\maketitle
\tableofcontents

\section{Introduction}

Quantum data hiding~\cite{dh-original-1, dh-original-2} is one of the most bizarre phenomena that arise when quantum systems are used to store classical information. It refers to the existence of pairs of states on a bipartite quantum system that can be perfectly distinguished using global measurements acting jointly on both parties, yet remain nearly indistinguishable when the parties are restricted to local operations assisted by classical communication (LOCC). This makes it possible to hide a bit of information in a bipartite quantum system in such a way that it stays essentially irretrievable unless the two parties can exchange quantum information. Loosely speaking, quantum data hiding can be regarded as a quantum analogue of the classical phenomenon of secret sharing~\cite{Shamir1979}, yet it is strictly stronger, because classical communication breaks secret sharing but not quantum data hiding.

Beyond its cryptographic relevance, quantum data hiding has been suggested~\cite{private, Christandl2017} to play a key role in entanglement theory, particularly through its link with bound entanglement~\cite{Horodecki-review, Horodecki-open-problems}, a form of entanglement that cannot be distilled into ebits via LOCC, even when arbitrarily many state copies are available. The idea is as follows. Given a data hiding pair, one can construct a four-partite state, shared between two agents Alice and Bob, where one Alice-Bob pair of systems is called the \emph{shield} and the other, consisting of a single qubit per party, is called the \emph{key}. The shield can hide a bit $a\in {0,1}$ via a data hiding pair, while the key is prepared in one of the two Bell states $\ket{\Psi_a} \coloneqq \left(\ket{00} + (-1)^a \ket{11}\right)/\sqrt2$, depending on $a$. The intuition expressed in~\cite{private} is that any entanglement present in such a state should be essentially undistillable, meaning that even many copies of the state cannot be converted by LOCC into one close to a pure ebit. In quantum information parlance, the state should be \emph{bound entangled}. Indeed, on the one hand, the entanglement cannot be retrieved without knowledge of $a$, since $\Psi_0 + \Psi_1 = \ketbra{00} + \ketbra{11}$ is (proportional to) a separable state. On the other hand, the value of $a$ cannot be recovered reliably without global measurements, which are not available under LOCC. Constructing data hiding states thus provides natural candidates for bound entanglement~\cite{private,Christandl2017,Horodecki-review,Horodecki-open-problems}.

In the original works~\cite{dh-original-1, dh-original-2}, an example of a data hiding pair was provided using the two extremal Werner states~\cite{Werner}, i.e.\ the normalised projectors onto the fully antisymmetric and fully symmetric subspaces of a bipartite Hilbert space $\C^d\otimes \C^d$. These two states, hereafter called the antisymmetric and symmetric states, respectively, are orthogonal and hence perfectly distinguishable under global (projective) measurements, yet the highly nonlocal nature of their supports makes them difficult to distinguish using LOCC. A possible drawback of this construction, however, is that entanglement is required to implement it in the first place. While the symmetric state is separable, i.e.\ unentangled, the antisymmetric state does contain some entanglement~\cite{Christandl2012}. One can remedy this by constructing two \emph{separable} states that are nearly indistinguishable under LOCC, but in that case perfect orthogonality must be sacrificed~\cite{Eggeling2002}. Other randomised constructions~\cite{Aubrun-2015} preserve perfect orthogonality but typically involve highly entangled states.

Since these works, the problem of finding a separable, perfectly orthogonal data hiding pair---namely, a pair of separable quantum states with orthogonal supports that are nevertheless nearly indistinguishable under LOCC---has remained open. Here we solve this problem by providing an explicit construction based on a family of group symmetric states. Our strategy is to first construct a pair of orthogonal states that are only imperfectly data hiding, i.e.\ that can still be discriminated with \emph{some} accuracy using LOCC, and then to boost their LOCC indistinguishability by using a trick similar to that of the original work~\cite{dh-original-1}, namely, hiding a classical bit into the parity of a long string of bits encoded into the base pair.

The paper is organised as follows: Section~\ref{prob_statement} formalises the problem of quantum data hiding and states our main result; Section~\ref{sec_proof} contains its proof; Section~\ref{sec_conclusion} presents the conclusions and open questions.

\section{Problem statement}\label{prob_statement}
In this section we introduce the notation needed to formalise the concept of quantum data hiding and to mathematically state our main result.

Consider a bipartite quantum system. 
Matthews, Wehner, and Winter~\cite{VV-dh} introduced the class of \emph{LOCC POVMs}, consisting of all measurements implementable by local operations and classical communication between the two parties. 
They further defined the associated \emph{LOCC norm}, denoted $\|\cdot\|_{\mathrm{LOCC}}$, which naturally arises in the study of quantum state discrimination~\cite{KHATRI} when restricted to LOCC POVMs. 
Specifically, let $\rho_1$ and $\rho_2$ be bipartite quantum states. 
Suppose one is given a single copy of an unknown state, promised to be either $\rho_1$ or $\rho_2$ with equal prior probability. 
The optimal success probability of identifying the state using LOCC POVMs is~\cite{VV-dh}
\bb
  P_{\mathrm{succ}}^{(\mathrm{LOCC})}(\rho_1,\rho_2) 
  = \tfrac{1}{2} + \tfrac{1}{4}\|\rho_1 - \rho_2\|_{\mathrm{LOCC}}.
\ee
This relation can be regarded as a definition of the LOCC norm with a clear operational interpretation: $\tfrac{1}{2}\|\rho_1-\rho_2\|_{\mathrm{LOCC}}$ quantifies the maximum bias achievable in distinguishing $\rho_1$ from $\rho_2$ under LOCC. 
In other words, the larger the LOCC norm, the easier it is to discriminate the two states with LOCC measurements.

This is directly analogous to the operational meaning of the trace norm $\|\cdot\|_1$ in the context of quantum state discrimination under global measurements. 
Indeed, the celebrated Holevo-Helstrom theorem~\cite{HELSTROM,Holevo1976} establishes that, given a single copy of a state promised to be either $\rho_1$ or $\rho_2$ with equal prior probability, the optimal success probability when optimising over \emph{all} (global) POVMs is
\bb
  P_{\mathrm{succ}}^{(\mathrm{ALL})}(\rho_1,\rho_2) 
  = \tfrac{1}{2} + \tfrac{1}{4}\|\rho_1 - \rho_2\|_1\,,
\ee
where $\tfrac{1}{2}\|\rho_1-\rho_2\|_1$ is the trace distance between $\rho_1$ and $\rho_2$~\cite{KHATRI}.

We are now ready to introduce the notion of quantum data hiding. 
Informally, two bipartite states $\rho_1$ and $\rho_2$ form a data-hiding pair if they are perfectly distinguishable by global measurements (i.e.~they are orthogonal and thus $P_{\mathrm{succ}}^{(\mathrm{ALL})}(\rho_1,\rho_2)=1$), yet they remain nearly indistinguishable under LOCC, i.e.~$P_{\mathrm{succ}}^{(\mathrm{LOCC})}(\rho_1,\rho_2)\approx\tfrac{1}{2}$ (random guess) or equivalently
\bb
\tfrac{1}{2}\|\rho_1-\rho_2\|_{\mathrm{LOCC}}\approx 0\,.
\ee
One can formalise this concept by introducing an error parameter $\varepsilon$ as follows:
\begin{Def}[($\varepsilon$-quantum data hiding states)]
Let $\varepsilon \in (0,1)$.  
A pair of quantum states $(\rho_1,\rho_2)$ is called a pair of $\varepsilon$-quantum data hiding states if they are orthogonal and satisfy
\bb
  \frac{1}{2}\,\|\rho_1 - \rho_2\|_{\mathrm{LOCC}} \le \varepsilon.
\ee
\end{Def}
Previously, $\varepsilon$-quantum data hiding states were known to exist only when the states are entangled~\cite{Christandl2012,Aubrun-2015}. 
Additionally, separable pairs that are nearly indistinguishable under LOCC were also constructed~\cite{Eggeling2002}, but these are not perfectly orthogonal and therefore cannot be perfectly distinguished by global measurements, making them not fully satisfactory for data hiding. In this work we show that all these requirements can be satisfied simultaneously: orthogonality, separability, and $\varepsilon$-indistinguishability under LOCC. Namely, we prove that for every $\varepsilon\in(0,1)$ there exist separable $\varepsilon$-quantum data hiding states, as stated in Theorem~\ref{main_thm_data} below. Moreover, our construction is explicit and provides quantitative bounds on the required local dimension. 
\begin{thm}[(Existence of separable, orthogonal quantum data hiding states)]\label{main_thm_data}
For every $\varepsilon\in(0,1)$ there exist bipartite states $\rho_1,\rho_2$ on $\C^{D}\otimes\C^{D}$ that are both separable and orthogonal, and satisfy
\bb
  \tfrac{1}{2}\|\rho_1-\rho_2\|_{\mathrm{LOCC}}\le \varepsilon,
\ee
with local dimension bounded as $D \le 40\bigl(\tfrac{2}{\varepsilon}\bigr)^{10}$. 
\end{thm}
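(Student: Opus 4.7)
The plan is to proceed by a two-stage strategy explicitly anticipated in the introduction. The hard step is to construct a \emph{base} pair of separable and orthogonal states $(\sigma_0,\sigma_1)$ on $\C^d\otimes\C^d$ whose PPT (and therefore LOCC) norm is strictly smaller than the trivial maximum $2$, say
\bb
\tfrac12\|\sigma_0-\sigma_1\|_{\ppt}\ \le\ \eta\ <\ 1
\ee
for some explicit $\eta$. The second step is to amplify LOCC indistinguishability by parity-encoding $n$ copies of this pair, and the final step is to balance $d$ and $n$ to match the quantitative bound $D\le 40(2/\varepsilon)^{10}$.

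For the base pair, motivated by the abstract's emphasis on group-symmetric, PPT-invariant states, I would look for a finite group $G$ acting on $\C^d\otimes\C^d$ by local unitaries whose representation decomposes into at least two invariant subspaces, each spanned by product vectors and each preserved (or exchanged) by the partial transpose. Normalising the projectors onto two such subspaces produces $\sigma_0,\sigma_1$ that are automatically orthogonal, separable, and PPT-invariant by construction. Crucially, any LOCC or PPT distinguishing effect may be twirled over $G$ without loss, reducing the computation of $\|\sigma_0-\sigma_1\|_{\ppt}$ to a finite-dimensional convex optimisation on the commutant of $G$, with the PPT constraint taking the form of a finite list of semidefinite inequalities. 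The target is an explicit, rigorous upper bound $\eta<1$ on this symmetry-reduced program, and this is where I expect the numerical-analysis tools advertised in the abstract to play their role.

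The amplification is a direct extension of the parity trick of~\cite{dh-original-1}. For $n\ge 1$ define
\bb
\rho_a\ \coloneqq\ \frac{1}{2^{n-1}}\sum_{\substack{x\in\{0,1\}^n\\ x_1\oplus\cdots\oplus x_n=a}}\sigma_{x_1}\otimes\cdots\otimes\sigma_{x_n}\qquad (a\in\{0,1\}),
\ee
viewed as bipartite states on $\C^D\otimes\C^D$ with $D=d^n$ after regrouping the $n$ Alice factors on one side and the $n$ Bob factors on the other. Each $\rho_a$ is separable as a convex combination of tensor products of separable states, and $\sigma_0\perp\sigma_1$ makes the $2^n$ tensor products pairwise orthogonal, so $\rho_0\perp\rho_1$. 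The key identity is the telescoping
\bb
\rho_0-\rho_1\ =\ \frac{1}{2^{n-1}}\,(\sigma_0-\sigma_1)^{\otimes n},
\ee
which follows from $\sum_{x:\,\oplus_i x_i=0}-\sum_{x:\,\oplus_i x_i=1}\ =\ \prod_{i=1}^n\sum_{x_i\in\{0,1\}}(-1)^{x_i}$. Combining this with the tensor-multiplicativity of the PPT norm on Hermitian operators---which stems from the fact that the partial transpose distributes over tensor products, so products of PPT effects are again PPT effects, and the resulting SDP factorises---yields
\bb
\tfrac12\|\rho_0-\rho_1\|_{\locc}\ \le\ \tfrac12\|\rho_0-\rho_1\|_{\ppt}\ \le\ \tfrac{1}{2^n}\,\|\sigma_0-\sigma_1\|_{\ppt}^n\ \le\ \eta^n.
\ee

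Finally, taking $n=\lceil\log(1/\varepsilon)/\log(1/\eta)\rceil$ makes $\eta^n\le\varepsilon$ and gives $D=d^n\le d\cdot(1/\varepsilon)^{\log d/\log(1/\eta)}$. Tuning the base construction so that $\log d/\log(1/\eta)\le 10$ while keeping the prefactor bounded by $40\cdot 2^{10}$ delivers the advertised bound $D\le 40(2/\varepsilon)^{10}$. The main obstacle is unambiguously the first step: exhibiting an explicit $G$ and a symmetry-reduced program whose optimum can be rigorously bounded by a numerically useful $\eta<1$. The amplification and the final parameter optimisation, by contrast, are essentially direct computations once the base estimate is in hand.
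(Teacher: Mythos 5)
Your overall architecture (base pair, parity amplification, parameter tuning) matches the paper's, and your base-pair sketch is close to what is actually done: the paper twirls product states over a group of local unitaries $U\otimes U$ to obtain two separable, orthogonal, PPT-invariant states $\sigma_0^{(d)},\sigma_1^{(d)}$ with $\tfrac12\|\sigma_0^{(d)}-\sigma_1^{(d)}\|_{\ppt}=\tfrac12+\tfrac1d<1$. The parity identity $\rho_0-\rho_1=\tfrac{1}{2^{n-1}}(\sigma_0-\sigma_1)^{\otimes n}$ and the preservation of separability and orthogonality are also correct.

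The gap is the amplification step. You assert that the PPT norm is sub-multiplicative under tensor products because ``products of PPT effects are again PPT effects, and the resulting SDP factorises.'' That reasoning only yields super-multiplicativity: tensor products of feasible effects are feasible for the product problem, so the maximum of the $n$-copy SDP is at least the product of the single-copy maxima. It does not give the upper bound you need, because the feasible set of the $n$-copy problem contains many operators that are PPT across the global Alice/Bob cut but are not products (nor even separable) across the $n$ copies. Whether $\|X^{\otimes n}\|_{\ppt}\le\|X\|_{\ppt}^{n}$ holds is essentially the open issue underlying Conjecture~\ref{cj1}; if it followed from the structure of the SDP, that conjecture (in its PPT-norm form) would be settled in a line. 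Instead, the paper proves the weaker but sufficient bound $\tfrac12\|\rho_1^{(k,d)}-\rho_0^{(k,d)}\|_{\ppt}\le 2\mu_d^{k}$ with $\mu_d\to\sqrt{3/8}$, strictly worse than the $(\tfrac12+\tfrac1d)^{k}$ that multiplicativity would give, and obtaining it consumes the bulk of the technical work: a symmetry reduction of the $k$-copy PPT norm to an explicit $\ell_1$-type optimisation (Lemma~\ref{lemma_lp}), a relaxation to a Tikhonov-regularised least-squares problem solved via the singular value decomposition of the partial-transpose action matrix $W_d$, and a large-deviation estimate via Sanov's theorem to control the resulting combinatorial sum. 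So the step you dismiss as ``essentially direct computations'' is where the real content lies, and your argument as written does not close it.
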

The explicit construction of $\rho_1$ and $\rho_2$, together with the proof of the theorem, is provided in Section~\ref{explicit_costruction} below. 
The theorem shows that a local dimension of at most $D=O(1/\varepsilon^{10})$ is sufficient to construct separable $\varepsilon$-quantum data hiding states. For vanishing $\varepsilon$, this dimension diverges; however, this is not a limitation of our construction but an inherent feature of any quantum data hiding scheme. 
In fact, one can show that a local dimension of at least $D=\Omega(1/\varepsilon)$ is required to realise $\varepsilon$-quantum data hiding states:

\begin{rem}[(Required dimension for quantum data hiding)]\label{converse_rem}
Let $\varepsilon\in(0,1)$. Assume there exists a local dimension $D\in\mathbb{N}$ such that there are orthogonal bipartite states $\rho_1,\rho_2$ on $\C^{D}\otimes\C^{D}$ satisfying $\tfrac{1}{2}\|\rho_1-\rho_2\|_{\mathrm{LOCC}}\le \varepsilon$. Then necessarily $D\ge \frac{1}{2}+\tfrac{1}{2\varepsilon}$.

This follows directly from \cite[Eq.~(43)]{ultimate} , which lower bounds the LOCC norm in terms of the trace norm as $\|\cdot\|_{\mathrm{LOCC}} \;\ge\; \tfrac{1}{2D-1}\,\|\cdot\|_1$ (see also~\cite[Corollary~17]{VV-dh} for a weaker bound). Applying this to $\rho_1-\rho_2$ gives
\bb
  \varepsilon \;\ge\; \tfrac{1}{2}\|\rho_1-\rho_2\|_{\mathrm{LOCC}}
  \;\ge\; \tfrac{1}{2D-1}\,\tfrac{1}{2}\|\rho_1-\rho_2\|_1
  \;=\; \tfrac{1}{2D-1}\,,
\ee
where the last equality uses that $\rho_1$ and $\rho_2$ are orthogonal, thus proving the claim.
\end{rem}

\section{Construction of separable, orthogonal quantum data hiding states}\label{sec_proof}

Our construction of states that are nearly indistinguishable under LOCC follows the strategy introduced in the foundational works on quantum data hiding~\cite{dh-original-1,dh-original-2}. 
We begin with a pair of bipartite states $\sigma_0,\sigma_1$ on $\mathbb{C}^d\otimes \mathbb{C}^d$ that are only imperfectly distinguishable under LOCC, i.e.~$\tfrac{1}{2}\|\sigma_1-\sigma_0\|_{\mathrm{LOCC}}<1$. 
From such a pair, one can try to generate new states that are harder to distinguish under LOCC by encoding parity information. 
Specifically, for any $k\in\mathbb{N}$ we define the \emph{odd state} $\rho_1^{(k)}$ and the \emph{even state} $\rho_0^{(k)}$ on $\bigl(\mathbb{C}^d\bigr)^{\otimes k}\otimes \bigl(\mathbb{C}^d\bigr)^{\otimes k}$ as
\bb\label{eq_evenodd}
  \rho_1^{(k)} &\coloneqq \frac{1}{2^{k-1}}
  \sum_{\substack{x_1,\ldots,x_k\in\{0,1\}\\ x_1+\cdots+x_k\equiv 1\ (\mathrm{mod}\,2)}}
  \sigma_{x_1}\otimes\cdots\otimes\sigma_{x_k}, \\
  \rho_0^{(k)} &\coloneqq \frac{1}{2^{k-1}}
  \sum_{\substack{x_1,\ldots,x_k\in\{0,1\}\\ x_1+\cdots+x_k\equiv 0\ (\mathrm{mod}\,2)}}
  \sigma_{x_1}\otimes\cdots\otimes\sigma_{x_k}.
\ee
Thus, $\rho_1^{(k)}$ (resp.~$\rho_0^{(k)}$) is the uniform mixture of tensor products $\sigma_{x_1}\otimes\cdots\otimes\sigma_{x_k}$ over all odd (resp.~even) parity strings. 
Distinguishing $\rho_1^{(k)}$ from $\rho_0^{(k)}$ is therefore equivalent to determining the parity of the number of copies of $\sigma_1$ in the mixture of $k$ quantum systems. 
Intuitively, this task should become increasingly difficult as $k$ grows. 
Formally, one may conjecture that for all $\sigma_0,\sigma_1$ with $\tfrac{1}{2}\|\sigma_1-\sigma_0\|_{\mathrm{LOCC}}<1$, the corresponding even and odd states satisfy
\bb
    \lim_{k\to\infty}\tfrac{1}{2}\|\rho_1^{(k)}-\rho_0^{(k)}\|_{\mathrm{LOCC}} \stackrel{?}{=} 0\,.
\ee
Since it holds that $\frac{\rho_1^{(k)}-\rho_0^{(k)}}{2}=\left(\frac{\sigma_1-\sigma_0}{2}\right)^{\otimes k}$, as is easily verified, the conjecture can be restated as follows:
\begin{cj}\label{cj1}
    For all bipartite states $\sigma_0,\sigma_1$ with $\tfrac{1}{2}\|\sigma_1-\sigma_0\|_{\mathrm{LOCC}}<1$, it holds that
    \bb 
       \lim_{k\to\infty}\left\|\left(\frac{\sigma_1-\sigma_0}{2}\right)^{\otimes k}\right\|_{\mathrm{LOCC}}=0\,.
    \ee
\end{cj}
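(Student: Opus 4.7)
Set $X \coloneqq (\sigma_1-\sigma_0)/2$, so that $\|X\|_{\mathrm{LOCC}}<1$ by hypothesis; the goal is to prove $\|X^{\otimes k}\|_{\mathrm{LOCC}}\to 0$ as $k\to\infty$. A first natural attempt is to restrict attention to \emph{product} LOCC measurements on the $k$ copies, i.e.\ $M=\bigotimes_{i=1}^{k}M_i$ with each $M_i$ an LOCC POVM element on the $i$-th copy. For such measurements the bias factorises, $\Tr(M\,X^{\otimes k})=\prod_{i=1}^{k}\Tr(M_i\,X)$, which is at most $\|X\|_{\mathrm{LOCC}}^k$ in absolute value and hence exponentially small. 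Unfortunately, general LOCC protocols on $X^{\otimes k}$ can exploit adaptive classical communication across tensor factors, and this cross-copy adaptivity can be strictly more powerful than processing each copy independently. Closing this gap is the very content of the conjecture.

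The cleanest route is to pass to a \emph{submultiplicative} surrogate norm. The separable-measurement norm $\|\cdot\|_{\mathrm{SEP}}$ and the PPT-measurement norm $\|\cdot\|_{\mathrm{PPT}}$ are natural candidates: they satisfy $\|\cdot\|_{\mathrm{LOCC}}\le\|\cdot\|_{\mathrm{SEP}}\le\|\cdot\|_{\mathrm{PPT}}$, and the PPT norm is manifestly tensor-submultiplicative because the partial transpose distributes over tensor products. If one could show $\|X\|_{\mathrm{PPT}}<1$ under the hypothesis, the conjecture would follow immediately from $\|X^{\otimes k}\|_{\mathrm{LOCC}}\le\|X\|_{\mathrm{PPT}}^k\to 0$. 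The obstruction is that $\|X\|_{\mathrm{LOCC}}<1$ does \emph{not} imply $\|X\|_{\mathrm{PPT}}<1$ (or even $\|X\|_{\mathrm{SEP}}<1$) in general: this is precisely the quantitative embodiment of nonlocality without entanglement, the very phenomenon exploited in this paper.

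Failing a generic submultiplicative relaxation, I would attempt an inductive argument in $k$: peel off the first round of classical communication of any adaptive LOCC protocol on $X^{\otimes k}$, analyse its effect copy by copy, and aim for a recursion of the form $\|X^{\otimes k}\|_{\mathrm{LOCC}}\le C\cdot\|X\|_{\mathrm{LOCC}}\cdot\|X^{\otimes (k-1)}\|_{\mathrm{LOCC}}$ for some constant $C$ depending only on $\sigma_0,\sigma_1$. Iterating would then yield exponential decay.

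The main obstacle is precisely this recursive step: generic LOCC protocols neither focus on a single tensor factor per round nor decouple cleanly after conditioning on measurement outcomes, so partial measurements on one copy can entangle the classical posterior across all the remaining copies. Handling this in full generality appears to require either a new structural result about LOCC protocols on product states, or else the use of additional symmetry of $\sigma_0,\sigma_1$ (e.g.\ invariance under a compact group acting diagonally), which through twirling would collapse the LOCC, SEP, and PPT norms on $X^{\otimes k}$ and render the submultiplicative relaxation of the previous paragraph directly applicable.
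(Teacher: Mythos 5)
You have not given a proof, and you should know that the paper does not either: the statement you were asked to prove is stated as Conjecture~1 precisely because it is open, and the authors explicitly say they cannot prove it in full generality (it reappears as an open problem in the conclusions). What the paper actually establishes is the special case for the particular symmetric states $\sigma_0^{(d)},\sigma_1^{(d)}$, and it does so not by any submultiplicativity argument but by twirling the $k$-copy PPT-norm optimisation down to an explicit linear program (Lemma~\ref{lemma_lp}) and then bounding that LP directly via Tikhonov-regularised least squares and Sanov's theorem. Your closing remark --- that additional group symmetry, via twirling, is what makes the PPT relaxation tractable --- correctly anticipates the spirit of that strategy, so your diagnosis of where progress is possible is sound even though no proof results.

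One concrete error to flag: your claim that the PPT norm ``is manifestly tensor-submultiplicative because the partial transpose distributes over tensor products'' is unjustified and almost certainly not available. Distributivity of $\Gamma$ over tensor products only shows that a product of feasible POVM elements is feasible for the $k$-copy problem, which gives \emph{super}multiplicativity of the optimum, $\|X^{\otimes k}\|_{\ppt}\ge\|X\|_{\ppt}^{k}$; submultiplicativity would require ruling out correlated PPT measurements across copies that beat the product strategy, which is exactly the hard part. Indeed, the paper computes $\tfrac12\|\sigma_0^{(d)}-\sigma_1^{(d)}\|_{\ppt}=\tfrac12+\tfrac1d<1$ for $d\ge3$; if your submultiplicativity claim held, the entire $k$-copy analysis of Lemma~\ref{lemma_upper_bound} would be a one-line corollary, and the authors would not have needed Sanov's theorem. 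So even for the restricted class of states where $\|X\|_{\ppt}<1$, your proposed shortcut does not close the argument, and for general states satisfying only $\|X\|_{\locc}<1$ you correctly note that the relaxation can fail at the first step. The conjecture remains open; your proposal is an honest map of the obstacles rather than a proof.
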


A proof of Conjecture~\ref{cj1} would yield many examples of separable, orthogonal quantum data hiding states. 
Indeed, if $\sigma_0$ and $\sigma_1$ are also separable and orthogonal, then for every $k$ the associated states $\rho^{(k)}_0$ and $\rho^{(k)}_1$ remain separable and orthogonal. 
Thus, Conjecture~\ref{cj1} would directly imply:

\begin{cj}[(Construction of orthogonal, separable quantum data hiding states)]\label{cj2}
    Let $\sigma_0,\sigma_1$ be bipartite states that are orthogonal, separable, and satisfy $\tfrac{1}{2}\|\sigma_1-\sigma_0\|_{\mathrm{LOCC}}<1$. 
    Then the associated even state $\rho^{(k)}_0$ and odd state $\rho^{(k)}_1$, defined in~\eqref{eq_evenodd}, satisfy
    \bb\label{eq_conjectured}
       \lim_{k\to\infty}\tfrac{1}{2}\left\|\rho_1^{(k)}-\rho_0^{(k)}\right\|_{\mathrm{LOCC}} = 0\,.
    \ee
    Equivalently, for all $\varepsilon\in(0,1)$, the even and odd states $\rho^{(k)}_0,\rho^{(k)}_1$ form a pair of separable, orthogonal $\varepsilon$-quantum data hiding states for sufficiently large $k$.
\end{cj}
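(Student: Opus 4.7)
My plan is to reduce Conjecture~\ref{cj2} to Conjecture~\ref{cj1} and then to attempt the latter via a submultiplicative proxy norm. The reduction is already half-sketched in the excerpt: since $\sigma_0$ and $\sigma_1$ are separable, every tensor product $\sigma_{x_1}\otimes\cdots\otimes\sigma_{x_k}$ is separable across the regrouped bipartition $A^k|B^k$, so the classical mixtures $\rho_0^{(k)}$ and $\rho_1^{(k)}$ are separable as well. The orthogonality of $\sigma_0$ and $\sigma_1$ implies that any two distinct product strings are orthogonal, and grouping them by parity yields orthogonal $\rho_0^{(k)}$ and $\rho_1^{(k)}$. Invoking the identity $\tfrac{1}{2}(\rho_1^{(k)}-\rho_0^{(k)})=((\sigma_1-\sigma_0)/2)^{\otimes k}$ recorded above, one gets $\tfrac{1}{2}\|\rho_1^{(k)}-\rho_0^{(k)}\|_{\mathrm{LOCC}}=\|X^{\otimes k}\|_{\mathrm{LOCC}}$ with $X\coloneqq(\sigma_1-\sigma_0)/2$ satisfying $\|X\|_{\mathrm{LOCC}}<1$, so the full statement follows from Conjecture~\ref{cj1}.

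For Conjecture~\ref{cj1}, the natural strategy is to exhibit a norm $\|\cdot\|_\star$ with (i) $\|\cdot\|_{\mathrm{LOCC}}\le\|\cdot\|_\star$, (ii) submultiplicativity under tensor products, and (iii) $\|X\|_\star<1$, since then $\|X^{\otimes k}\|_{\mathrm{LOCC}}\le\|X\|_\star^{k}\to 0$. The first candidate is the separable norm $\|\cdot\|_{\mathrm{SEP}}$, defined by replacing LOCC POVMs with POVMs whose effects are all separable across $A|B$; property (i) is automatic because every LOCC effect is separable. For Hermitian $X$, one has the dual formulation $\tfrac{1}{2}\|X\|_{\mathrm{SEP}}=\sup\{|\Tr(MX)|: 0\le M\le \id,\ M,\id-M\in\SEP\}$. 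Since a product effect $M\otimes N$ across $AA'|BB'$ is automatically separable, testing with products immediately yields the wrong direction $\|X\otimes Y\|_{\mathrm{SEP}}\ge\|X\|_{\mathrm{SEP}}\|Y\|_{\mathrm{SEP}}$; the nontrivial content is the reverse, submultiplicative direction.

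The main obstacle is exactly this converse: a separable effect across $AA'|BB'$ need not convexly factor as a product of separable effects across $A|B$ and $A'|B'$, because Alice may hold operators correlated across her two copies (and similarly Bob) while preserving global separability across the Alice--Bob cut. This is the same phenomenon that defeats multiplicativity for several entanglement measures, and it is plausibly the reason no one has established Conjecture~\ref{cj1} in the generality stated. I would therefore pursue two fall-back routes. First, exploit the extra algebraic structure of $X=(\sigma_1-\sigma_0)/2$ when $\sigma_0,\sigma_1$ are orthogonal separable states: such a difference lies in the real span of separable operators with a tightly constrained spectrum, which may yield submultiplicativity on this restricted class even if it fails in general. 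Second, and more pragmatically, restrict to highly symmetric $\sigma_0,\sigma_1$ admitting a group action compatible with the $A|B$ split; twirling any candidate LOCC POVM into a symmetric one reduces the computation of $\|X^{\otimes k}\|_{\mathrm{LOCC}}$ to a small, explicit semidefinite program whose asymptotic decay can be analysed via numerical-analysis tools. This is presumably the route actually taken to prove Theorem~\ref{main_thm_data}, at the cost of establishing something strictly weaker than Conjecture~\ref{cj1}.
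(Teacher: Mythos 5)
You are attempting to prove a statement that the paper itself leaves open: Conjecture~\ref{cj2} is explicitly \emph{not} proved there in full generality; the authors only establish it for the specific symmetric states $\sigma_0^{(d)},\sigma_1^{(d)}$ of Definition~\ref{def_sigma01}, which is what yields Theorem~\ref{main_thm_data}. Your reduction of Conjecture~\ref{cj2} to Conjecture~\ref{cj1} is correct and matches the paper's own remarks: separability of the parity mixtures is immediate, orthogonality follows because any two product strings with different labels have orthogonal supports in at least one tensor factor, and the identity $\tfrac12(\rho_1^{(k)}-\rho_0^{(k)})=\bigl((\sigma_1-\sigma_0)/2\bigr)^{\otimes k}$ converts the claim into $\|X^{\otimes k}\|_{\mathrm{LOCC}}\to 0$. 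But the remaining step is precisely the open problem, and your proposal does not close it: no norm is exhibited that dominates $\|\cdot\|_{\mathrm{LOCC}}$, is submultiplicative under tensor products, and is $<1$ on $X$. As you note, the separable norm is easily seen to be supermultiplicative, which is the wrong direction, and your first fall-back (exploiting the spectral structure of differences of orthogonal separable states) is not developed into an argument. So, judged as a proof of Conjecture~\ref{cj2}, the proposal has a genuine, named gap --- the same gap the paper acknowledges.

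Your second fall-back is an accurate diagnosis of what the paper actually does to get the weaker Theorem~\ref{main_thm_data}, with two corrections worth recording. First, the relaxation used is the PPT norm, not the separable norm, and no (sub)multiplicativity of any norm is proved or used: for the twirling-invariant states the quantity $\tfrac12\|\rho_1^{(k,d)}-\rho_0^{(k,d)}\|_{\mathrm{PPT}}$ is rewritten via Lemma~\ref{lemma_lp} as an explicit $4^k$-variable optimisation, which is then bounded by $2\mu_d^k$ using Tikhonov-regularised least squares and Sanov's theorem rather than a single-copy quantity raised to the $k$-th power. Indeed the rate obtained satisfies $\mu_d\to\sqrt{3/8}\approx 0.612$, strictly worse than the single-copy value $\tfrac12\|\sigma_0^{(d)}-\sigma_1^{(d)}\|_{\mathrm{PPT}}=\tfrac12+\tfrac1d$ that naive multiplicativity would suggest --- a sign that the mechanism is genuinely not a multiplicativity argument. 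Second, the conclusion of that analysis is only the existence statement of Theorem~\ref{main_thm_data}; Conjecture~\ref{cj2} in the generality you set out to prove remains open in the paper exactly where it remains open in your proposal.
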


While we are not able to prove Conjecture~\ref{cj2} in full generality, we construct explicit families of separable, orthogonal states $\sigma_0,\sigma_1$ for which~\eqref{eq_conjectured} holds, thereby establishing our main result stated in Theorem~\ref{main_thm_data}: the existence of separable, orthogonal quantum data hiding states. The construction of such states $\sigma_0,\sigma_1$ is provided in the forthcoming subsection.

\subsection{Two special states}\label{explicit_costruction}
Consider a bipartite system with Hilbert space $\C^d\otimes \C^d$.  
Define the operators
\bb\label{def_theta}
  \Theta_0 \coloneqq \Phi,\qquad 
  \Theta_1 \coloneqq P - \Phi,\qquad  
  \Theta_2 \coloneqq Q_+,\qquad  
  \Theta_3 \coloneqq Q_-,
\ee
where
\bb\label{def_orth_proj}
  \Phi &\coloneqq \frac1d \sum_{i,j=0}^{d-1} \ketbraa{i}{j}\otimes \ketbraa{i}{j},\\ 
  P &\coloneqq \sum_{i=0}^{d-1} \ketbra{i}\otimes \ketbra{i}, \\
  Q_+ &\coloneqq \tfrac{\id + F - 2P}{2}, \\
  Q_- &\coloneqq \tfrac{\id - F}{2}, \\
  F &\coloneqq \sum_{i,j=0}^{d-1}\ketbraa{i}{j}\otimes\ketbraa{j}{i}.
\ee
Here $F$ is the flip (swap) operator, $\Phi$ the maximally entangled state, $P$ the projector onto the maximally correlated subspace, and $Q_-$ the projector onto the antisymmetric subspace.  
It is straightforward to check that $\Theta_0,\Theta_1,\Theta_2,\Theta_3$ are mutually orthogonal projectors with ranks
\bb\label{tr_theta}
  \Tr\Theta_0 = 1,\qquad 
  \Tr\Theta_1 = d-1,\qquad  
  \Tr\Theta_2 = \tfrac{d(d-1)}{2},\qquad  
  \Tr\Theta_3 = \tfrac{d(d-1)}{2},
\ee
and they resolve the identity, i.e.~$\sum_{i=0}^3 \Theta_i = \id$. 

The \emph{antisymmetric state}~\cite{Christandl2012}, which can be defined as $\alpha\coloneqq \frac{\Theta_3}{\Tr \Theta_3}$, is universally regarded as one of the best candidates for counterexamples in quantum information theory~\cite{Christandl2012}. However, more recently another state has been claiming the throne~\cite{irreversibility}: the state $\omega\coloneqq \frac{\Theta_1}{\Tr \Theta_1}$, which is the normalised projector onto the $(d-1)$-dimensional subspace orthogonal to the maximally entangled state within the maximally correlated subspace. Now the forbidden question is: \emph{what happens if one mixes them?} Following this somehow outrageous idea, let us look at the state
\bb
\sigma_1^{(d)} \coloneqq&\ \frac12\left(\alpha + \omega\right)=\tfrac{1}{2(d-1)}\,\Theta_1+\tfrac{1}{d(d-1)}\,\Theta_3\,,
\ee
which might be called the \emph{biblical state}, for it mixes the alpha and the omega. An orthogonal state that nicely pairs up with this one is
\bb
\sigma_0^{(d)} \coloneqq \frac{1}{d}\Phi+\left(1-\frac{1}{d}\right)\frac{Q_-}{\Tr Q_-}=\tfrac{1}{d}\,\Theta_0+\tfrac{2}{d^2}\,\Theta_2\,.
\ee
Our construction is based precisely on these two states, which we summarise in the following definition for ease of reference.

\begin{Def}[(Two special states)]\label{def_sigma01}
Let $\sigma_0^{(d)},\sigma_1^{(d)}$ be two states on $\C^d\otimes\C^d$ defined as
\bb\label{sigma01}
  \sigma_0^{(d)} \coloneqq \tfrac{1}{d}\,\Theta_0+\tfrac{2}{d^2}\,\Theta_2,
  \qquad
  \sigma_1^{(d)} \coloneqq \tfrac{1}{2(d-1)}\,\Theta_1+\tfrac{1}{d(d-1)}\,\Theta_3.
\ee
\end{Def}
By construction, $\sigma_0^{(d)}$ and $\sigma_1^{(d)}$ are valid quantum states and they are orthogonal. In Appendix~\ref{sec_appendix_c}, we show that they are invariant under partial transposition, and hence both are PPT states. Moreover, results from~\cite{park2023universal} imply that these states are not only PPT but in fact separable. For completeness, we present an independent proof of this result below.

\begin{lemma}\label{lemma_sepsep}
The states $\sigma_0^{(d)}$ and $\sigma_1^{(d)}$ in~\eqref{sigma01} are orthogonal and separable. 
\end{lemma}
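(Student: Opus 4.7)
Orthogonality is immediate: $\sigma_0^{(d)}$ is supported on the range of $\Theta_0+\Theta_2$ while $\sigma_1^{(d)}$ is supported on the range of $\Theta_1+\Theta_3$, and the four projectors $\Theta_i$ are pairwise orthogonal by construction, hence $\Tr(\sigma_0^{(d)}\sigma_1^{(d)})=0$. The substantive content of the lemma is separability, which I plan to establish by writing each of $\sigma_0^{(d)}$ and $\sigma_1^{(d)}$ as an explicit finite convex combination of pure product states.

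For $\sigma_0^{(d)}$ my proposal is to use symmetric sign-randomized products. For each $\vec\epsilon\in\{\pm 1\}^d$ set $\ket{\chi_{\vec\epsilon}}\coloneqq \tfrac{1}{\sqrt d}\sum_{k=0}^{d-1}\epsilon_k\ket{k}$ and form $\ket{\chi_{\vec\epsilon}}\otimes\ket{\chi_{\vec\epsilon}}$; the identity to verify is
\bb
\sigma_0^{(d)}=\tfrac{1}{2^d}\sum_{\vec\epsilon\in\{\pm1\}^d}\ketbra{\chi_{\vec\epsilon}}\otimes\ketbra{\chi_{\vec\epsilon}}.
\ee
Expanding the right-hand side in the computational basis reduces to the indicator that each index in the multiset $\{k,l,m,n\}$ appears with even multiplicity. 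This support splits into three overlapping ``two-pair'' patterns $(k=l,m=n)$, $(k=m,l=n)$, $(k=n,l=m)$ contributing $d\Phi$, $\id$, $F$ respectively, plus a common fully-degenerate overlap $k=l=m=n$ contributing $P$; inclusion-exclusion then yields $\tfrac{1}{d^2}(d\Phi+\id+F-2P)=\tfrac{1}{d}\Phi+\tfrac{2}{d^2}Q_+=\sigma_0^{(d)}$, exhibiting $\sigma_0^{(d)}$ as a convex combination of $2^d$ product states.

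For $\sigma_1^{(d)}$ I plan to exploit the pair-block structure of $\Theta_1$ and $\Theta_3$. Setting $\ket{\chi_{kl}^-}\coloneqq\tfrac{1}{\sqrt 2}(\ket{kk}-\ket{ll})$ and $\ket{\psi_{kl}^-}\coloneqq\tfrac{1}{\sqrt 2}(\ket{kl}-\ket{lk})$ for $k<l$, a short calculation gives $\Theta_1=\tfrac{2}{d}\sum_{k<l}\ketbra{\chi_{kl}^-}$ and $\Theta_3=\sum_{k<l}\ketbra{\psi_{kl}^-}$. Substituting into the definition of $\sigma_1^{(d)}$ collapses both coefficients into a common $\tfrac{1}{d(d-1)}$, producing
\bb
\sigma_1^{(d)}=\tfrac{1}{d(d-1)}\sum_{k<l}\bigl[\ketbra{\chi_{kl}^-}+\ketbra{\psi_{kl}^-}\bigr].
\ee
Each bracket is supported on the four-dimensional block $\mathrm{span}\{\ket{kk},\ket{kl},\ket{lk},\ket{ll}\}$ and equals $\tfrac{1}{2}(\id_4-X_{kl}\otimes X_{kl})$, where $X_{kl}$ denotes the flip on $\mathrm{span}\{\ket{k},\ket{l}\}$. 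A direct $2\otimes 2$ check then yields the product-state identity $\tfrac{1}{2}(\id_4-X_{kl}\otimes X_{kl})=\ketbra{+_{kl}}\otimes\ketbra{-_{kl}}+\ketbra{-_{kl}}\otimes\ketbra{+_{kl}}$, with $\ket{\pm_{kl}}\coloneqq\tfrac{1}{\sqrt 2}(\ket{k}\pm\ket{l})$, so $\sigma_1^{(d)}$ emerges as a uniform convex combination of $d(d-1)$ pure product states.

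The main obstacle is picking the right product-state ansatz for $\sigma_0^{(d)}$. A tempting but wrong guess is continuous phase randomization $\ket{\psi_{\vec\theta}}=\tfrac{1}{\sqrt d}\sum_k e^{i\theta_k}\ket{k}$ averaged over Haar phases: this produces $\tfrac{1}{d^2}(d\Phi+\id-P)$ and misses the flip term $F$ entirely, because the $(k=n,l=m)$ pattern is killed by $\E[e^{i(2\theta_k-2\theta_l)}]=0$ for $k\neq l$. The discrete $\pm 1$ randomization restores this pattern via $\epsilon_k^2=1$, which is exactly what is needed to build $Q_+$. For $\sigma_1^{(d)}$ the difficulty is only computational: one must notice that the coefficients $\tfrac{1}{2(d-1)}$ and $\tfrac{1}{d(d-1)}$ were tuned precisely so that the sum degenerates block by block into the $d=2$ anticorrelated mixed state, whose separability is elementary.
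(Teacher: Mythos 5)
Your proof is correct, and it takes a genuinely different route from the paper. The paper's argument is channel-theoretic: it uses Lemma~\ref{lemma_G_twirling_simplified} to verify that $\sigma_0^{(d)}=\TT_\GG(\ketbra{e}\otimes\ketbra{e})$ and $\sigma_1^{(d)}=\TT_\GG(\ketbra{+}\otimes\ketbra{-})$, and then invokes the fact that the $\GG$-twirl is an LOCC channel, hence maps product inputs to separable outputs. You instead exhibit explicit finite separable decompositions and check them by direct expansion: the sign-randomised ensemble $\frac{1}{2^d}\sum_{\vec\epsilon}\ketbra{\chi_{\vec\epsilon}}\otimes\ketbra{\chi_{\vec\epsilon}}=\frac{1}{d^2}(d\Phi+\id+F-2P)=\sigma_0^{(d)}$ (the inclusion--exclusion over the even-multiplicity patterns is sound, though which of the three two-pair patterns yields $\id$ versus $d\Phi$ depends on your index convention --- the total is what matters), and the block decomposition $\sigma_1^{(d)}=\frac{1}{d(d-1)}\sum_{k<l}\bigl(\ketbra{+_{kl}}\otimes\ketbra{-_{kl}}+\ketbra{-_{kl}}\otimes\ketbra{+_{kl}}\bigr)$, which follows from $\Theta_1=\frac{2}{d}\sum_{k<l}\ketbra{\chi_{kl}^-}$, $\Theta_3=\sum_{k<l}\ketbra{\psi_{kl}^-}$, and the correct two-qubit identity $\ketbra{\chi^-_{kl}}+\ketbra{\psi^-_{kl}}=\frac12(\id_4-X_{kl}\otimes X_{kl})=\ketbra{+_{kl}}\otimes\ketbra{-_{kl}}+\ketbra{-_{kl}}\otimes\ketbra{+_{kl}}$. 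The two proofs are secretly the same ensemble: your product states are precisely the $\GG$-orbits of the paper's seed vectors $\ket{e}\otimes\ket{e}$ and $\ket{+}\otimes\ket{-}$. What yours buys is self-containedness --- it needs neither the character-theoretic computation behind Lemma~\ref{lemma_G_twirling_simplified} nor the observation that the twirl is LOCC --- at the cost of more hands-on algebra; what the paper's version buys is economy, since the twirling lemma is needed anyway in the proof of Lemma~\ref{lemma_lp}, and the LOCC-channel viewpoint makes separability immediate once the fixed-point identities are checked.
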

Before proving the lemma, let us establish a useful tool. Let us define the \emph{$\GG$-twirling channel} 
\bb\label{G_twirling} \TT_\GG (X) \coloneqq \frac{1}{|\GG|} \sum_{U\in \GG} (U\otimes U)\, X \,(U\otimes U)^\dag, 
\ee 
where $\GG$ is the group of $d\times d$ unitaries \bb \GG \coloneqq \{ U_\pi V_\e:\ \pi \in S_d,\ \e\in \{- 1,1\}^d \}, \ee with $U_\pi \coloneqq \sum_{i=0}^{d-1} \ketbraa{\pi(i)}{i}$ implementing the permutation $\pi\in S_d$, and $V_\e \coloneqq \sum_{i=0}^{d-1} \e_i \ketbra{i}$ a diagonal Hermitian unitary. \begin{lemma}\label{lemma_G_twirling_simplified}
For all operators $X$, the $\GG$-twirling acts as \bb\label{G_twirling_simplified} \TT_\GG(X) = \sum_{i=0}^3 \frac{\Tr[X\,\Theta_i]}{\Tr\Theta_i}\, \Theta_i, \ee where $\Theta_0,\Theta_1,\Theta_2,\Theta_3$ are the four mutually orthogonal projectors in~\eqref{def_theta}. \end{lemma}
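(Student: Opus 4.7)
The plan is to use the fact that $\TT_\GG$ is the orthogonal projection, with respect to the Hilbert--Schmidt inner product, onto the fixed-point subspace of the conjugation action $X\mapsto (U\otimes U)X(U\otimes U)^\dag$, which coincides with the commutant $\CC_\GG \coloneqq \{X : [X,U\otimes U]=0 \text{ for all } U\in \GG\}$. If I can establish that $\CC_\GG$ is four-dimensional with basis $\Theta_0,\Theta_1,\Theta_2,\Theta_3$, the formula follows almost mechanically. Writing $\TT_\GG(X)=\sum_i c_i(X)\,\Theta_i$ and taking the Hilbert--Schmidt inner product with $\Theta_j$, mutual orthogonality of the $\Theta_i$'s together with self-adjointness of $\TT_\GG$ (each $U\in\GG$ appears alongside $U^{-1}$) and invariance of $\Theta_j$ yield
\[
  c_j\,\Tr\Theta_j \;=\; \Tr[\Theta_j\,\TT_\GG(X)] \;=\; \Tr[\TT_\GG(\Theta_j)\,X] \;=\; \Tr[\Theta_j\,X],
\]
so $c_j=\Tr[X\,\Theta_j]/\Tr\Theta_j$, matching~\eqref{G_twirling_simplified}.

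The inclusion $\Span\{\Theta_0,\ldots,\Theta_3\}\subseteq \CC_\GG$ is the easy half. Each element of $\GG$ is a real orthogonal matrix, so $(U\otimes U)\ket{\psi}=\ket{\psi}$ for the unnormalised maximally entangled vector $\ket{\psi}=\sum_i \ket{i}\ket{i}$ (using $(U\otimes \bar U)\ket{\psi}=\ket{\psi}$ for every unitary $U$), which gives $(U\otimes U)\Phi(U\otimes U)^\dag=\Phi$. Invariance of $P$, $F$, and $\id$ under the signed permutations of $\GG$ is equally routine, and since $\Theta_0,\ldots,\Theta_3$ are linear combinations of $\{\id,P,F,\Phi\}$, they all commute with every $U\otimes U$ for $U\in\GG$.

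The heart of the argument, and the main obstacle, is the reverse inclusion $\CC_\GG\subseteq \Span\{\Theta_0,\ldots,\Theta_3\}$. I would carry this out in two stages. First, average a generic operator $X=\sum_{i,j,k,l} X_{ij,kl}\,\ketbraa{i}{j}\otimes\ketbraa{k}{l}$ over the sign subgroup $\{V_\e\otimes V_\e : \e\in\{-1,+1\}^d\}$: each basis operator picks up a factor $\e_i\e_j\e_k\e_l$, so only those terms whose index multiset $\{i,j,k,l\}$ has every distinct value occurring an even number of times survive. This isolates exactly four index patterns: $(i,i,i,i)$, $(i,i,k,k)$ with $i\neq k$, $(i,j,i,j)$ with $i\neq j$, and $(i,j,j,i)$ with $i\neq j$. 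Second, averaging over the permutation subgroup $\{U_\pi\otimes U_\pi : \pi\in S_d\}$ forces the coefficient within each pattern to be constant. The four resulting spanning operators are $P$, $\id-P$, $d\Phi-P$, and $F-P$, whose span equals $\Span\{\id,P,F,\Phi\}=\Span\{\Theta_0,\Theta_1,\Theta_2,\Theta_3\}$. The only delicate point is careful bookkeeping to avoid double-counting overlaps between the patterns along the diagonals; once this is dealt with, the lemma follows.
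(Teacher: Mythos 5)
Your proof is correct, and it differs from the paper's in the one step that carries all the weight: showing that the commutant of $\{U\otimes U : U\in\GG\}$ is exactly $\Span\{\Theta_0,\dots,\Theta_3\}$. The paper deliberately avoids the direct computation you carry out; it instead counts the dimension of the commutant by character theory, evaluating $\frac{1}{|\GG|}\sum_{U\in\GG}(\Tr U)^4=4$ via the combinatorics of permutations with $k$ fixed points and the fourth moment of $\sum_j \e_j$, and then concludes that the four invariant operators already found must span everything. Your route averages a generic matrix element first over the sign subgroup (killing all index patterns except $(i,i,i,i)$, $(i,i,k,k)$, $(i,j,i,j)$, $(i,j,j,i)$) and then over $S_d$ (making the coefficient constant on each pattern), landing on $\Span\{P,\id-P,d\Phi-P,F-P\}=\Span\{\id,P,F,\Phi\}=\Span\{\Theta_0,\dots,\Theta_3\}$; this is more elementary and yields the explicit spanning set rather than just its cardinality, at the cost of the index bookkeeping you flag (which is harmless, since your four patterns are pairwise disjoint once the inequality constraints are imposed). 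The surrounding scaffolding --- invariance of $\id,P,F,\Phi$ under signed permutations, self-adjointness and idempotence of $\TT_\GG$, and the extraction of the coefficients $c_j=\Tr[X\Theta_j]/\Tr\Theta_j$ from orthogonality --- matches the paper's logic and is sound. Both arguments are complete; yours is the ``ad hoc'' computation the paper sidesteps, the paper's is shorter but leans on a representation-theoretic identity that itself requires a nontrivial combinatorial evaluation.
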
 A proof is given in Appendix~\ref{sec_appendix_a}. Note also that $\TT_\GG$ is an LOCC channel, as it can be implemented via the following LOCC protocol: (i) Alice samples $U\in\GG$ uniformly at random; (ii) she communicates which $U$ has been sampled to Bob via classical communication; (iii) both parties apply $U$ locally. We can now prove Lemma~\ref{lemma_sepsep}. \begin{proof}[Proof of Lemma~\ref{lemma_sepsep}] By exploiting Lemma~\ref{lemma_G_twirling_simplified}, a direct calculation shows that \bb\label{output_twirling_xi_pm} \sigma_0^{(d)} &= \TT_\GG\!\left(\ketbra{e}\otimes \ketbra{e}\right), \\ \sigma_1^{(d)} &= \TT_\GG\!\left(\ketbra{+}\otimes \ketbra{-}\right), \ee where $\ket{e} \coloneqq \tfrac{1}{\sqrt{d}} \sum_{i=0}^{d-1} \ket{i}$, and $\ket{\pm} \coloneqq \tfrac{1}{\sqrt{2}} \bigl(\ket{0}\pm \ket{1}\bigr)$. This demonstrates that $\sigma_0^{(d)}$ and $\sigma_1^{(d)}$ can be obtained as the outputs of $\TT_\GG$ acting on product states. Since $\TT_\GG$ is an LOCC channel, it follows that $\sigma_0^{(d)}$ and $\sigma_1^{(d)}$ are separable. This establishes the claim. \end{proof}

We also quantify how well $\sigma_0^{(d)}$ and $\sigma_1^{(d)}$ can be distinguished by LOCC.

\begin{prop}[(Bounds on the LOCC norm between the two special states)]\label{locc_norm_xi}
For all $d\ge 2$,
\bb
  \frac12 - \frac{1}{d} \;\le\; \frac12\left\| \sigma_0^{(d)} - \sigma_1^{(d)} \right\|_\locc \;\le\; \frac12 + \frac1d\, .
\ee
In particular, for $d\ge 3$ we have $\frac12\left\| \sigma_0^{(d)} - \sigma_1^{(d)}\right\|_\locc<1$.
\end{prop}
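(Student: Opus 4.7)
The plan is to prove the two bounds separately, with the upper bound being the more substantial.

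\textbf{Lower bound.} I would exhibit an explicit LOCC POVM realising bias $\tfrac12-\tfrac1d$. The natural candidate is the separable (and hence LOCC) two-outcome POVM $\{M,\,I-M\}$ with $M := \sum_{i=0}^{d-1}\ketbra{i}\otimes\ketbra{i}$, implemented by having Alice and Bob each measure in the computational basis and report whether their outcomes coincide. Using $\langle ii|\Phi|ii\rangle = 1/d$, $\langle ii|Q_+|ii\rangle = 0$, $\langle ii|P|ii\rangle = 1$, and $\langle ii|Q_-|ii\rangle = 0$, a direct calculation gives $\Tr[M\sigma_0^{(d)}] = 1/d$ and $\Tr[M\sigma_1^{(d)}] = 1/2$, so that $|\Tr[M(\sigma_0^{(d)}-\sigma_1^{(d)})]| = \tfrac12 - \tfrac1d$.

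\textbf{Upper bound.} I would use the standard chain $\|\cdot\|_{\mathrm{LOCC}}\leq\|\cdot\|_{\mathrm{SEP}}\leq\|\cdot\|_{\mathrm{PPT}}$ and bound the PPT norm, which admits the semidefinite characterisation
\[
\tfrac12\|X\|_{\mathrm{PPT}} = \max\bigl\{\Tr[MX]\,:\,0\leq M\leq I,\ 0\leq M^{T_B}\leq I\bigr\}.
\]
Since $\GG$ consists of real matrices, the $U\otimes U$ action commutes with partial transposition on $B$, so the PPT cone is $\GG$-invariant. Combined with the $\GG$-invariance of $X := \sigma_0^{(d)}-\sigma_1^{(d)}$, this means I can twirl any candidate $M$ by $\TT_\GG$ (itself an LOCC channel) without changing the objective and while preserving PPT feasibility. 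By Lemma~\ref{lemma_G_twirling_simplified} I may therefore restrict to $M=\sum_{i=0}^3 c_i\Theta_i$ with $(c_0,c_1,c_2,c_3)\in[0,1]^4$. Using $P^{T_B}=P$ and $F^{T_B}=d\Phi$, the expansion of each $\Theta_i^{T_B}$ in the basis $\{\Theta_j\}_{j=0}^{3}$ can be read off, turning $0\leq M^{T_B}\leq I$ into explicit linear inequalities $c_i'(\vec c) \in [0,1]$ in $(c_0,c_1,c_2,c_3)$.

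The objective becomes the linear functional $\Tr[MX] = c_0/d + (d-1)c_2/d - (c_1+c_3)/2$, so its maximum over the PPT polytope is attained at a vertex. Exploiting monotonicity -- the objective is strictly increasing in $c_0,c_2$ and strictly decreasing in $c_1,c_3$ -- I would argue the optimiser satisfies $c_0=c_2=1$ and $c_1=0$, at which point the binding PPT inequality (the upper bound $M^{T_B}\leq I$ on the coefficient of $\Theta_2$) forces $c_3=1-2/d$, yielding the claimed optimal value $\tfrac12+\tfrac1d$.

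\textbf{Main obstacle.} The delicate step is certifying that this candidate vertex is indeed the global LP maximum: monotonicity alone does not rule out solutions on faces where $c_0$ or $c_2$ is strictly below $1$ in exchange for a larger feasible window for $(c_1,c_3)$. This needs a short case analysis of the remaining boundary faces of the PPT polytope, showing that any such trade-off strictly decreases the objective -- a calculation that is straightforward but requires careful bookkeeping of the eight PPT inequalities $0 \leq c_i' \leq 1$ in the parameters $(c_0,c_1,c_2,c_3)$.
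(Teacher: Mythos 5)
Your lower bound is exactly the paper's: the computational-basis coincidence test $M=\sum_i \ketbra{i}\otimes\ketbra{i}$ with $\Tr[M\sigma_0^{(d)}]=1/d$ and $\Tr[M\sigma_1^{(d)}]=1/2$. That part is complete and correct.

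The upper bound, however, has a genuine gap, and it sits precisely at the step you flag as the ``main obstacle.'' Your twirling reduction to the LP $\max\{c^\intercal r_d : c\in[0,1]^4,\ W_d^\intercal c\in[0,1]^4\}$ is sound (it is the $k=1$ case of the primal problem~\eqref{primal_problem}), and the point $c=(1,0,1,1-\tfrac{2}{d})$ is indeed feasible with objective value $\tfrac12+\tfrac1d$. But exhibiting a feasible point of a maximisation problem only proves $\tfrac12\|\sigma_0^{(d)}-\sigma_1^{(d)}\|_{\ppt}\ge \tfrac12+\tfrac1d$, which is the wrong direction: the proposition needs an \emph{upper} bound on the LOCC (hence PPT) norm, i.e.\ a proof that \emph{no} feasible $c$ exceeds $\tfrac12+\tfrac1d$. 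The monotonicity heuristic does not deliver this — increasing $c_0$ or $c_2$ tightens several of the eight constraints $W_d^\intercal c\in[0,1]^4$ simultaneously, so one cannot simply set $c_0=c_2=1$, $c_1=0$ and read off $c_3$. (Indeed, your bookkeeping already slips here: with $c_0=c_2=1$, $c_1=0$, the objective is \emph{decreasing} in $c_3$, so what pins $c_3$ at $1-\tfrac2d$ is the \emph{lower} bound $c_3\ge 1-\tfrac2d$ coming from $(W_d^\intercal c)_0\le 1$, not the constraint on the $\Theta_2$ coefficient, which only gives the unhelpful upper bound $c_3\le 1-\tfrac2d$.) Without the deferred case analysis, or some other optimality certificate, the upper bound is simply not established.

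The paper avoids this entirely by passing to the LP dual (Lemma~\ref{lemma_lp} with $k=1$): by weak duality \emph{any} dual feasible point upper-bounds the primal value, and the explicit choice $x_0=\bigl(\tfrac{3d-2}{4d(d-1)},\,0,\,\tfrac{d-2}{4d},\,0\bigr)^\intercal$ gives $\|x_0\|_1+\|\bar r_d-W_dx_0\|_1=\tfrac12+\tfrac1d$ with a one-line verification. If you want to stay with your primal formulation, the cleanest fix is to supply exactly such a dual certificate (or, equivalently, nonnegative multipliers satisfying complementary slackness at your vertex — note the vertex is degenerate, with six active constraints in four variables, so this is not automatic). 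Your feasible primal point is still a nice by-product: combined with the paper's dual point it certifies that the PPT norm equals $\tfrac12+\tfrac1d$ exactly, which is what Proposition~\ref{locc_norm_xi2} records.
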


\noindent The proof is given in Appendix~\ref{sec_appendix_b}.  As a consequence of Proposition~\ref{locc_norm_xi}, for $d\ge 3$ the optimal LOCC protocol to distinguish the equiprobable $\sigma_1^{(d)}$ and $\sigma_0^{(d)}$ succeeds with probability strictly smaller than one. We are therefore in the setting discussed above: $\sigma_0^{(d)}$ and $\sigma_1^{(d)}$ are orthogonal, separable, and only imperfectly distinguishable under LOCC.  
Following Conjecture~\ref{cj2}, we now amplify indistinguishability via the parity construction. That is, for $k,d\in\N$, we define the odd and even state on $\bigl(\C^d\bigr)^{\otimes k}\otimes \bigl(\C^d\bigr)^{\otimes k}$ as:
\bb\label{eq_evenodd2}
\begin{aligned}
  \rho_1^{(k,d)} &\coloneqq \frac{1}{2^{k-1}}
    \sum_{\substack{x_1,\ldots,x_k\in\{0,1\}\\ x_1+\cdots+x_k\equiv 1\ (\mathrm{mod}\,2)}}
    \sigma_{x_1}^{(d)}\otimes\cdots\otimes\sigma_{x_k}^{(d)}, \\
  \rho_0^{(k,d)} &\coloneqq \frac{1}{2^{k-1}}
    \sum_{\substack{x_1,\ldots,x_k\in\{0,1\}\\ x_1+\cdots+x_k\equiv 0\ (\mathrm{mod}\,2)}}
    \sigma_{x_1}^{(d)}\otimes\cdots\otimes\sigma_{x_k}^{(d)}.
\end{aligned}
\ee
Our main technical contribution is an upper bound on the LOCC norm between the even and odd states. This is given in the following proposition, which forms the core of our analysis.

\begin{prop}[(Upper bound on the LOCC norm between even and odd states)]\label{main_PROP}
Let $d,k\in\N$ with $d\ge 2$. Then
\bb
  \tfrac12 \bigl\|\rho_1^{(k,d)}-\rho_0^{(k,d)}\bigr\|_{\locc}\;\le\; 2\,\mu_d^k,
\ee
where the quantity $\mu_d$ (plotted in Fig.~\ref{fig:mu_d}) is defined as
\bb\label{def_mud}
    \mu_d =\sqrt{\,1- \frac{ \frac{5}{8}+\frac{1}{d}\left(\tfrac{1}{4}+\tfrac{2}{d}+\tfrac{9}{d^{2}}-\tfrac{6}{d^{3}} -\;\sqrt{2}\,\Bigl(\tfrac{9}{4}+\tfrac{3}{d}+\tfrac{1}{d^{2}}\Bigr) \sqrt{1-\frac{2}{d+\frac{4}{d}}}\right) }{ 1+\tfrac{2}{d}+\tfrac{4}{d^{2}} }\,}\;.
\ee
\end{prop}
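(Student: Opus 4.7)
The plan is to relax the LOCC norm to the larger separable norm $\|\cdot\|_{\sep}$, defined via bipartite POVMs whose elements are separable across the $A{:}B$ cut, and then exploit the $\GG$-symmetry of the construction together with a per-copy convex optimisation that factorises under tensor products. Since $(\rho_1^{(k,d)}-\rho_0^{(k,d)})/2 = -\Delta^{\otimes k}/2^k$ with $\Delta\coloneqq\sigma_0^{(d)}-\sigma_1^{(d)}$, and every LOCC POVM element is separable, the proposition reduces to proving
\[
\tfrac{1}{2^k}\bigl\|\Delta^{\otimes k}\bigr\|_{\sep}\ \le\ 2\mu_d^k .
\]
By convexity of the SEP cone and a standard extreme-point argument, this reduces to bounding the biquadratic form $|\langle\alpha,\beta|\Delta^{\otimes k}|\alpha,\beta\rangle|$ over unit vectors $|\alpha\rangle,|\beta\rangle\in(\C^d)^{\otimes k}$ (rank-one product witnesses on the $A$:$B$ cut).

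Next I would bring the $\GG$-symmetry into play. The operator $\Delta$ lies in the span of the four projectors $\Theta_0,\ldots,\Theta_3$, so $\Delta^{\otimes k}$ is invariant under $\TT_{\GG}^{\otimes k}$. Because $\TT_{\GG}$ is an LOCC channel (hence SEP-preserving), one may pre-twirl the rank-one witness without changing the expectation, thereby restricting attention to witnesses whose only relevant data are the per-copy overlaps of $|\alpha\rangle\langle\alpha|$ and $|\beta\rangle\langle\beta|$ with $\bigotimes_{i=1}^k\Theta_{j_i}$ for $\vec j\in\{0,1,2,3\}^k$. Combined with Lemma~\ref{lemma_G_twirling_simplified}, this reparametrises the biquadratic form as a multi-index sum with coefficients given by the eigenvalues $\lambda_j$ of $\Delta$ on each $\Theta_j$ block and weights controlled by the reduced marginals of $|\alpha\rangle\langle\alpha|$ and $|\beta\rangle\langle\beta|$ on each single copy $A_iB_i$.

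The key technical step is the multiplicative factorisation of this multi-index sum across the $k$ copies. I would use a Cauchy--Schwarz / Gram-matrix argument on the per-copy marginals to show that, even though $|\alpha\rangle,|\beta\rangle$ may be entangled across copies, the supremum is dominated by the $k$-fold product of a single-copy optimum. That single-copy problem is a convex optimisation on four non-negative parameters---the overlaps of the per-copy marginals with each $\Theta_j$---subject to normalisation and positivity constraints inherited from $|\alpha\rangle\langle\alpha|$ and $|\beta\rangle\langle\beta|$; solving it via Lagrange multipliers produces a closed-form optimum, which is exactly $\mu_d$. The nested radical and the $\sqrt 2$ in \eqref{def_mud} arise from the KKT system of this four-parameter problem.

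The main obstacle, and the place where the ``numerical analysis of convex optimisation'' machinery advertised in the abstract is most needed, is the multiplicativity step: the biquadratic form does not split termwise across copies, so dominating it by a product of single-copy optima demands a delicate convex-analytic argument. One natural route is to express the per-copy contributions through a fixed $4\times 4$ transfer-type matrix and chain the copies via its operator norm; another is to set up an SDP relaxation whose dual certificate matches the Lagrangian optimum, and then verify tightness by producing an explicit primal witness. Either way, this factorisation constitutes the genuinely novel analytic step of the proof, whereas the earlier reductions are symmetry-driven and essentially routine.
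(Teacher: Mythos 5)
Your proposal has two genuine gaps, and the second one is fatal because it is precisely the difficulty the paper is designed to circumvent.

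First, the reduction of $\|\Delta^{\otimes k}\|_{\sep}$ to the biquadratic form $|\langle\alpha,\beta|\Delta^{\otimes k}|\alpha,\beta\rangle|$ over unit product vectors is not valid. The separable norm is a maximisation of $\Tr[M\Delta^{\otimes k}]$ over separable POVM elements subject to $0\le M\le\id$ (with $\id-M$ also separable); the extreme points of that feasible set are not rank-one product projectors, and an optimal $M$ is typically high-rank. Indeed, already for $k=1$ the paper computes $\tfrac12\|\sigma_0^{(d)}-\sigma_1^{(d)}\|_{\sep}=\tfrac12+\tfrac1d$, attained by the high-rank operator $E=P-\Phi+\tfrac{2}{d}Q_-$, whereas any single rank-one product witness gives a value of order $1/d$. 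Writing $M=\sum_i c_i\,\ketbra{\alpha_i}\otimes\ketbra{\beta_i}$ and bounding termwise loses the constraint $M\le\id$ only through $\sum_i c_i=\Tr M\le d^{2k}$, which destroys the bound.

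Second, and more importantly, the ``multiplicativity step'' you defer to a transfer-matrix or dual-certificate argument is not a technical detail to be filled in later: a statement of the form $\tfrac{1}{2^k}\|\Delta^{\otimes k}\|_{\sep}\le\bigl(\tfrac12\|\Delta\|_{\sep}\bigr)^k$ would immediately imply Conjecture~\ref{cj1}, which the authors explicitly state they are unable to prove; no such tensorisation property is known for the separable or LOCC norms. Relatedly, your claim that the single-copy optimum equals $\mu_d$ cannot be right: the exact single-copy value is $\tfrac12+\tfrac1d\to\tfrac12$, while $\mu_d\to\sqrt{3/8}\approx 0.612$. The paper avoids multiplicativity altogether: it relaxes to the PPT norm, uses the $\GG$-twirling to collapse the POVM optimisation to a linear program in $4^k$ variables, passes to the dual (Lemma~\ref{lemma_lp}), relaxes the resulting $\ell_1$ objective to an $\ell_2$ Tikhonov-regularised least-squares problem whose closed-form solution factorises through the SVD of $W_d^{\otimes k}$ (Lemma~\ref{lemma_tik}), and finally controls the surviving combinatorial sum with Sanov's theorem (Lemma~\ref{conseq_sanov}); $\mu_d$ and its nested radical emerge from that chain of lossy estimates, not from a single-copy KKT system. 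Your symmetry reductions are in the right spirit, but the analytic core of your argument is missing and, as formulated, would require resolving the paper's open conjecture.
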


\begin{figure}[h!]
  \centering
  \includegraphics[width=0.8\linewidth]{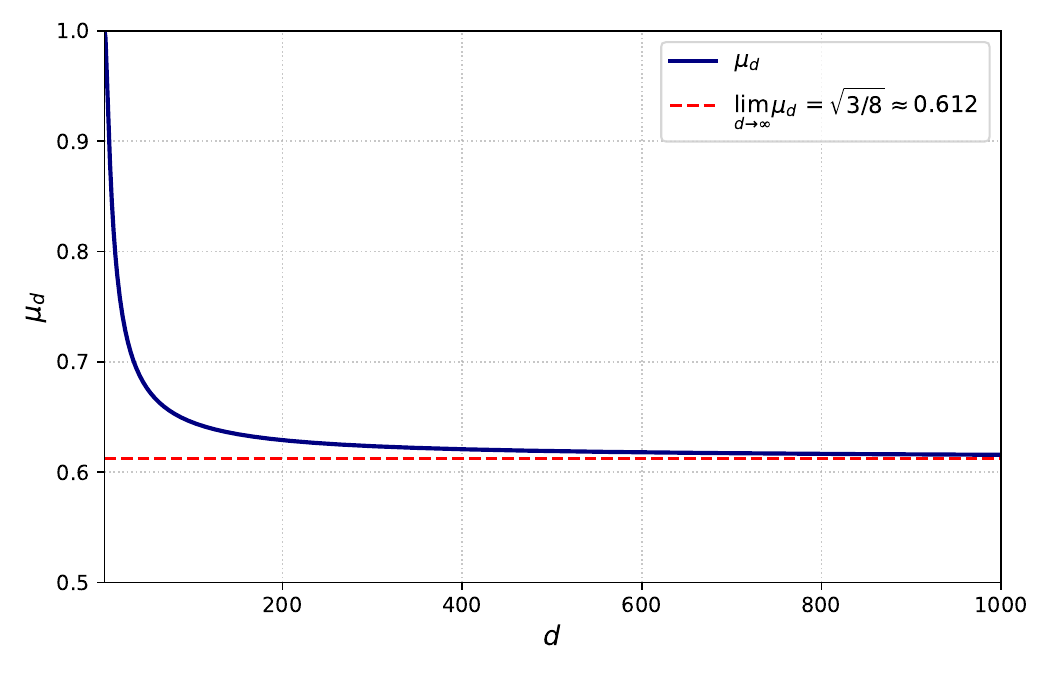}
  \caption{Behaviour of $\mu_d$ for $2 \le d \le 1000$. The function is monotonically decreasing in the parameter $d$, 
with $\mu_2 = 1$, $\mu_3 \approx 0.993$, and asymptotic value 
$\lim_{d \to \infty} \mu_d = \sqrt{3/8} \approx 0.612$. Crucially, it satisfies $\mu_d<1$ for all $d\ge3$.}
  \label{fig:mu_d}
\end{figure}

\noindent The proof is deferred to the Subsection~\ref{sec_proof_of_prop_main}. The behaviour of $\mu_d$ is shown in Fig.~\ref{fig:mu_d}: it decreases monotonically with $d$ and satisfies $\mu_d<1$ for all $d\ge3$. Consequently, Proposition~\ref{main_PROP} implies that
\bb
  \lim_{k\to\infty}\,\tfrac12 \bigl\|\rho_1^{(k,d)}-\rho_0^{(k,d)}\bigr\|_{\locc}=0,
\ee
so the odd and even states become asymptotically indistinguishable under LOCC, while remaining separable and orthogonal. They thus provide examples of  separable and orthogonal quantum data hiding states. We are therefore ready to prove our main result, Theorem~\ref{main_thm_data}.
\begin{proof}[Proof of Theorem~\ref{main_thm_data}]
    Fix $\varepsilon\in(0,1)$. By Proposition~\ref{main_PROP}, whenever $k$ is chosen large enough that $2\mu_d^k\le \varepsilon$, the states $\rho_0^{(k,d)}$ and $\rho_1^{(k,d)}$ form a pair of (separable, orthogonal) $\varepsilon$-quantum data hiding states. Since their associated local dimension is $d^k$, it follows that for any $\varepsilon\in(0,1)$ we can construct separable, orthogonal $\varepsilon$-quantum data hiding states with local dimension
\bb
  D_\varepsilon \;\coloneqq\;
  \min_{\substack{d\in\N,\ d\ge 2\\ k\in\N\\ 2\mu_d^k\le\varepsilon}} d^k.
\ee
Since for fixed $d$ the smallest admissible $k$ is $k=\Bigl\lceil \tfrac{\log(2/\varepsilon)}{\log(1/\mu_d)} \Bigr\rceil$, we can equivalently express 
\bb 
D_\varepsilon = \min_{d\in\N,\ d\ge 2}\; d^{\left\lceil \tfrac{\log(2/\varepsilon)}{\log(1/\mu_d)} \right\rceil}. 
\ee 
A numerical search reveals that the optimum is attained at $d=40$. Substituting this value yields 
\bb 
D_\varepsilon \;\le\; 40^{\left\lceil \frac{\log(2/\varepsilon)}{\log\left(1/\mu_{40}\right) } \right\rceil} \;\le\; 40\left(\frac{2}{\varepsilon}\right)^{\frac{\log 40}{\log\left(1/\mu_{40}\right)}} \;\le\; 40\left(\frac{2}{\varepsilon}\right)^{10}. 
\ee
Hence a local dimension of $D_\varepsilon\le40\,(2/\varepsilon)^{10}$ suffices to construct separable, orthogonal $\varepsilon$-quantum data hiding states. This concludes the proof.
\end{proof}

\subsection{Bounding the LOCC norm via the PPT norm}
To prove Proposition~\ref{main_PROP}, we have to upper bound the LOCC norm between the even and odd states.  A common approach in entanglement theory to deal with an optimisation over LOCC protocols is to relax it to the larger, more tractable class of \emph{PPT} protocols~\cite{Horodecki-review}.  
In this spirit, we will upper bound the LOCC norm by the \emph{PPT norm}~\cite{VV-dh}, denoted as $\|\cdot\|_{\ppt}$ and defined as follows.

Consider two bipartite states $\rho_1,\rho_2$, and suppose we are given a single copy of an unknown state, promised to be either $\rho_1$ or $\rho_2$ with equal prior probability. The optimal success probability of correctly identifying the state using PPT measurements is~\cite{VV-dh}
\begin{align}
P_{\mathrm{succ}}^{(\mathrm{PPT})}(\rho_1,\rho_2) 
  \coloneqq \max_{(M_1,M_2)\in \mathrm{PPT\;POVM}}
     \Bigl(\tfrac12\Tr[M_1\rho_1]+\tfrac12\Tr[M_2\rho_2]\Bigr),
\end{align}
where the maximisation is over the set of PPT POVMs~\cite{VV-dh},
\begin{align}
\mathrm{PPT\;POVM}
  \coloneqq \bigl\{(M_1,M_2):\; M_1,M_2\ge0,\; M_1+M_2=\id,\;
  M_1^\Gamma\ge 0,\; M_2^\Gamma\ge 0 \bigr\},
\end{align}
and $X^\Gamma$ denotes the partial transpose of $X$. This expression can be rewritten as
\begin{align}
P_{\mathrm{succ}}^{(\mathrm{PPT})}(\rho_1,\rho_2)
 &= \tfrac12 + \tfrac12\max_{(M_1,M_2)\in \mathrm{PPT\;POVM}} \Tr\!\bigl[M_1(\rho_1-\rho_2)\bigr] \\
 &= \tfrac12 + \tfrac14 \,\|\rho_1-\rho_2\|_{\ppt},
\end{align}
which defines the PPT norm via
\begin{align}
\tfrac12\,\|\rho_1-\rho_2\|_{\ppt}
 \;\coloneqq\; \max_{(M_1,M_2)\in \mathrm{PPT\;POVM}} \Tr\!\bigl[M_1(\rho_1-\rho_2)\bigr].
\end{align}
Equivalently, one can easily prove that this optimisation can also be expressed as
\begin{align}\label{eq_equiv_ppt}
\tfrac12\,\|\rho_1-\rho_2\|_{\ppt}
 = \max_{\substack{0\le M\le \id\\ 0\le M^\Gamma\le \id}} \Tr\!\bigl[M(\rho_1-\rho_2)\bigr]\,.
\end{align}
As a concrete example, in Appendix~\ref{sec_appendix_b} we show that for the states $\sigma_0^{(d)},\sigma_1^{(d)}$ defined in Definition~\ref{def_sigma01}, the PPT norm can be evaluated in closed form, yielding
\bb
  \tfrac12\left\|\sigma_0^{(d)}-\sigma_1^{(d)}\right\|_{\ppt} \;=\; \tfrac12 + \tfrac{1}{d}, 
  \qquad \forall\, d\ge 2\,.
\ee
Since every LOCC measurement is also a PPT measurement, it follows that
\bb
  \|\cdot\|_{\locc} \;\le\; \|\cdot\|_{\ppt}.
\ee
This key observation allows us to control the LOCC norm by bounding instead the more tractable PPT norm, which is precisely the strategy we shall follow in the proof of Proposition~\ref{main_PROP} in Subsection~\ref{sec_proof_of_prop_main}.

\subsection{Bounding the PPT norm}
In the following lemma we prove that the PPT norm between the even and odd states can be written in terms of a simplified optimisation problem.
\begin{lemma}[(Simplified optimisation for the PPT norm between even and odd states)]\label{lemma_lp}
For all $d,k\in\N$ with $d\ge 2$, the PPT norm between the even and odd states can be expressed as
\bb\label{linear_program}
\tfrac12 \bigl\|\rho_1^{(k,d)}-\rho_0^{(k,d)}\bigr\|_{\ppt}
  \;=\; \inf_{x\in\R^{4^k}} \Bigl( \|x\|_1 \;+\; \|\bar{r}_d^{\otimes k}-W_d^{\otimes k}x\|_1 \Bigr),
\ee
where
\bb\label{rdwd}
  \bar{r}_d \;\coloneqq\;
  \begin{pmatrix}
    \tfrac{1}{2d} \\[0.6ex]
    -\tfrac{1}{4} \\[0.6ex]
    \tfrac{d-1}{2d} \\[0.6ex]
    -\tfrac{1}{4}
  \end{pmatrix},
\qquad
W_d \;\coloneqq\;
\begin{pmatrix}
  \tfrac{1}{d} & \tfrac{1}{d} & \tfrac{1}{d} & -\tfrac{1}{d} \\[1.0ex]
  1-\tfrac{1}{d} & 1-\tfrac{1}{d} & -\tfrac{1}{d} & \tfrac{1}{d} \\[1.0ex]
  \tfrac{d-1}{2} & -\tfrac{1}{2} & \tfrac{1}{2} & \tfrac{1}{2} \\[1.0ex]
  -\tfrac{d-1}{2} & \tfrac{1}{2} & \tfrac{1}{2} & \tfrac{1}{2}
\end{pmatrix}.
\ee
Here, $\|x\|_1\coloneqq \sum_i |x_i|$ denotes the $\ell_1$ norm of $x$.
\end{lemma}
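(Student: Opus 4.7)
The plan is to turn the PPT-norm optimisation \eqref{eq_equiv_ppt} into a finite-dimensional linear program by exploiting the $\GG$-invariance of $\sigma_0^{(d)}$ and $\sigma_1^{(d)}$, and then to apply LP duality to bring it into the form \eqref{linear_program}. The starting point is that both $\sigma_0^{(d)}$ and $\sigma_1^{(d)}$ lie in the four-dimensional commutant spanned by $\Theta_0,\Theta_1,\Theta_2,\Theta_3$, so each summand of \eqref{eq_evenodd2} lies in the $4^k$-dimensional algebra spanned by the mutually orthogonal projectors $\Theta_{\vec i}\coloneqq \Theta_{i_1}\otimes\cdots\otimes\Theta_{i_k}$, indexed by $\vec i\in\{0,1,2,3\}^k$. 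Since every element of $\GG$ is a real orthogonal matrix, partial transposition commutes with the $k$-fold twirling $\TT_\GG^{\otimes k}$; therefore, replacing any feasible $M$ in \eqref{eq_equiv_ppt} by $\TT_\GG^{\otimes k}(M)$ preserves both the feasibility (including the PPT constraint) and the objective $\Tr[M(\rho_1^{(k,d)}-\rho_0^{(k,d)})]$. I may therefore restrict the maximisation to operators of the form $M=\sum_{\vec i}c_{\vec i}\Theta_{\vec i}$ with $c\in\R^{4^k}$, without loss of generality.

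Next, I would convert the constraints and the objective into explicit expressions in $c$. Because the $\Theta_{\vec i}$ are mutually orthogonal projectors summing to $\id$, the constraint $0\le M\le\id$ is equivalent to $0\le c\le\mathbf 1$. For $0\le M^\Gamma\le\id$, the action of partial transposition on the commutant must be computed: using $\Phi^\Gamma=F/d$, $P^\Gamma=P$ and $F^\Gamma=d\Phi$, a short calculation gives $\Theta_i^\Gamma=\sum_j(W_d)_{ij}\Theta_j$, so partial transposition acts on coordinates as $c\mapsto (W_d^\top)^{\otimes k}c$ and the PPT constraint becomes $0\le (W_d^\top)^{\otimes k}c\le\mathbf 1$. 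Two identities that follow immediately and will be used below are $W_d^\top\mathbf 1=\mathbf 1$, reflecting $\id^\Gamma=\id$, and $\bar r_d\cdot\mathbf 1=0$, reflecting $\Tr[\sigma_0^{(d)}-\sigma_1^{(d)}]=0$. A direct computation from \eqref{sigma01} and \eqref{tr_theta} identifies $(\bar r_d)_i=\Tr[\Theta_i(\sigma_0^{(d)}-\sigma_1^{(d)})/2]$, and the tensor-product identity $(\rho_1^{(k,d)}-\rho_0^{(k,d)})/2=((\sigma_1^{(d)}-\sigma_0^{(d)})/2)^{\otimes k}$ then yields $\Tr[M(\rho_1^{(k,d)}-\rho_0^{(k,d)})]=(-1)^k\,2\,\bar r_d^{\otimes k}\cdot c$.

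The symmetry-reduced LP thus reads $\max_c\,(-1)^k\,2\,\bar r_d^{\otimes k}\cdot c$ subject to $0\le c,\;(W_d^\top)^{\otimes k}c\le\mathbf 1$; the sign factor $(-1)^k$ may be dropped because the substitution $c\mapsto \mathbf 1-c$ preserves the feasible set and flips the objective, using $\bar r_d\cdot\mathbf 1=0$. After the further affine change $c=(y+\mathbf 1)/2$, which exploits $W_d^\top\mathbf 1=\mathbf 1$ once more, the problem becomes $\max_y\bar r_d^{\otimes k}\cdot y$ subject to $\|y\|_\infty\le 1$ and $\|(W_d^\top)^{\otimes k}y\|_\infty\le 1$. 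Standard LP duality --- dualise each $\ell_\infty$ constraint into a pair of nonnegative multipliers and collapse them into an $\ell_1$ term --- shows that $\max\{b^\top y:\|y\|_\infty\le 1,\ \|Ay\|_\infty\le 1\}=\min_v(\|v\|_1+\|b-A^\top v\|_1)$. Substituting $b=\bar r_d^{\otimes k}$ and $A^\top=W_d^{\otimes k}$, and invoking strong duality (applicable because the primal is a bounded LP with strictly feasible point $y=0$), yields \eqref{linear_program}.

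The only step requiring real care is the evaluation of $\Theta_i^\Gamma$ in the $\{\Theta_j\}$ basis, which is what produces the matrix $W_d$ and is the source of the slightly intricate coefficients in \eqref{rdwd}. Everything else --- restricting to the commutant, parametrising the constraints, writing out the objective, and applying LP duality --- is routine, and the core content of the lemma is the fact that the symmetry collapses the SDP defining the PPT norm to an $\ell_\infty$-constrained LP on only $4^k$ variables.
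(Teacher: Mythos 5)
Your proposal is correct and follows essentially the same route as the paper: twirl the POVM element with $\TT_\GG^{\otimes k}$ to reduce the PPT-norm SDP to a $4^k$-variable LP with constraints $c\in[0,1]^{4^k}$ and $(W_d^\intercal)^{\otimes k}c\in[0,1]^{4^k}$, then pass to the dual. The only (cosmetic) difference is that you centre the variable into an $\ell_\infty$-constrained form before dualising, whereas the paper dualises directly and then simplifies the dual via positive-part identities, using the same two facts $\mathbf{1}^\intercal \bar r_d=0$ and $W_d^\intercal\mathbf{1}=\mathbf{1}$ that you use.
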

\begin{proof} 
 Let us start by observing that
    \bb\label{max_problem}
        \frac{1}{2}\|\rho_0^{(k,d)}-\rho_1^{(k,d)}\|_{\ppt}\,\eqt{(i)}\  &\max_{\substack{0\le E\le \mathbb{1} \\ 0\le E^\Gamma\le \mathbb{1}}}\Tr\left[E\,\big(\rho_0^{(k,d)}-\rho_1^{(k,d)}\big)\right]\\
\eqt{(ii)}\ &\frac{1}{2^{k-1}}
        \max_{\substack{0\le E\le \mathbb{1} \\ 0\le E^\Gamma\le \mathbb{1}}} \Tr\left[E\,(\sigma_0^{(d)}-\sigma_1^{(d)})^{\otimes k}\right]\\
\eqt{(iii)}\ &\frac{1}{2^{k-1}}\max_{\substack{0\le E\le \mathbb{1} \\ 0\le E^\Gamma\le \mathbb{1}}}\Tr\left[E\,\TT_\GG^{\otimes k}\left((\sigma_0^{(d)}-\sigma_1^{(d)})^{\otimes k}\right)\right]\\
\eqt{(iv)}\ &
        \frac{1}{2^{k-1}}\max_{\substack{0\le E\le \mathbb{1} \\ 0\le E^\Gamma\le \mathbb{1}}} \Tr\left[\TT_\GG^{\otimes k}(E)\,\,(\sigma_0^{(d)}-\sigma_1^{(d)})^{\otimes k}\right]\,.
    \ee
    Here, in (i), we exploited the equivalent definition of PPT norm in \eqref{eq_equiv_ppt}. In (ii), we used that the even and odd states satisfy
    \bb
        \frac{\rho_0^{(k,d)}-\rho_1^{(k,d)}}{2}=\left(\frac{\sigma_0^{(d)}-\sigma_1^{(d)}}{2}\right)^{\otimes k}\,,
    \ee
    as it can be shown via simple algebra. In (iii), we used that the $\GG$-twirling $\TT_\GG$, defined in~\eqref{G_twirling}, satisfies $\TT_\GG(\sigma_0^{(d)})=\sigma_0^{(d)}$ and $\TT_\GG(\sigma_1^{(d)})=\sigma_1^{(d)}$. The latter identities can be easily proved exploiting the definition of $\sigma_0^{(d)}$ and $\sigma_1^{(d)}$ in~\eqref{sigma01} together with Lemma~\ref{lemma_sepsep}, which establishes that $\TT_\GG(\cdot) = \sum_{i=0}^3 \frac{\Tr[(\cdot)\,\Theta_i]}{\Tr\Theta_i}\, \Theta_i$, where $\Theta_0,\Theta_1,\Theta_2,\Theta_3$ are the four mutually orthogonal projectors defined in~\eqref{def_theta}. In (iv), we exploited the definition of $\TT_\GG$ in \eqref{G_twirling_simplified}, along with the cyclicity of the trace and the fact that summing over $U\in\GG$ is equivalent to summing over $U^\dagger \in\GG$.

    Now, note that if $E$ is an optimal solution of the maximum problem in \eqref{max_problem}, then the twirled operator $\TT_\GG^{\otimes k}(E)$ is also an optimal solution. Indeed, if $0\le E\le \mathbb{1}$ and $0\le E^\Gamma\le \mathbb{1}$, then it holds that $0\le \TT_\GG^{\otimes k}(E)\le \mathbb{1}$ and $ 0\le\TT_\GG^{\otimes k}(E^\Gamma)\le \mathbb{1}$, 
    as a consequence of the fact that $\TT_\GG$ is a positive linear superoperator. Moreover, since $U^*=U$ for all $U\in\GG$, it follows that $\TT_\GG^{\otimes k}(E^\Gamma)=\left(\TT_\GG^{\otimes k}(E)\right)^\Gamma$. Consequently, we conclude that we can restrict the maximisation in \eqref{max_problem} over operators of the form $\TT_\GG^{\otimes k}(E)$ satisfying the constaints $0\le \TT_\GG^{\otimes k}(E)\le \mathbb{1}$ and $0\le \left(\TT_\GG^{\otimes k}(E)\right)^\Gamma\le \mathbb{1}$.

    Moreover, \eqref{G_twirling_simplified} implies that $\TT_\GG^{\otimes k}(E)$ can be written as 
\bb\label{eq_ttggk}
    \TT_\GG^{\otimes k }(E)=\sum_{i_1,i_2,\ldots, i_k=0}^3 c_{i_1,i_2,\ldots, i_k}\, \Theta_{i_1}\otimes\Theta_{i_2}\ldots\otimes \Theta_{i_k}\,,
\ee
where $(c_{i_1,i_2,\ldots, i_k})_{i_1,i_2,\ldots,i_k}$ are a suitable real numbers which depends on $E$.
Since $(\Theta_i)_{i=0,1,2,3}$ are orthogonal projectors, the condition $0\le \TT_\GG^{\otimes k }(E)\le \mathbb{1}$ is equivalent to the condition 
\bb
    c_{i_1,i_2,\ldots, i_k}\in[0,1]\quad \forall\,i_1,i_2,\ldots,i_k\in\{0,1,2,3\}\,,
\ee
which can be rewritten more concisely as $c\in[0,1]^{4^k}$. In addition, a direct calculation allows one to express the partial transpose of each projector $\Theta_i$ as
\bb\label{Theta_partial_transpose}
    \Theta_{i}^\Gamma=\sum_{j=0}^3 (W_d)_{ij}\Theta_{j}\quad\forall\,i\in\{0,1,2,3\}\,,
\ee
where $W_d$ is the matrix defined in \eqref{rdwd}. Hence, combining \eqref{eq_ttggk} and \eqref{Theta_partial_transpose}, we obtain
\bb
    \left(\TT_\GG^{\otimes k }(E)\right)^\Gamma=\sum_{j_1,j_2,\ldots, j_k=0}^3\left(\sum_{i_1,i_2,\ldots, i_k=0}^3 (W_d)_{i_1j_1}(W_d)_{i_2j_2}\ldots (W_d)_{i_kj_k}\,c_{i_1,i_2,\ldots, i_k} \right)\Theta_{j_1}\otimes\Theta_{j_2}\ldots\otimes \Theta_{j_k}\,.
\ee
Consequently, the condition $0\le \left(\TT_\GG^{\otimes k }(E)\right)^\Gamma\le \mathbb{1}$ is equivalent to 
\bb
    \sum_{j_1,j_2,\ldots, j_k=0}^3 (W_d)_{j_1i_1}(W_d)_{j_2i_2}\ldots (W_d)_{j_ki_k}\,c_{j_1,j_2,\ldots, j_k}\in[0,1]\quad\forall\,i_1,i_2,\ldots,i_k\in\{0,1,2,3\}\,,
\ee
which can be concisely rewritten as $(W_d^\intercal)^{\otimes k} c\in[0,1]^{4^k}$. Moreover, by exploiting the orthogonality of the projectors $(\Theta_i)_{i=0,1,2,3}$, the expressions of the trace of these projectors provided in \eqref{tr_theta}, and the fact that
\bb
    \sigma_0^{(d)}-\sigma_1^{(d)}=\frac{1}{d}\Theta_0-\frac{1}{2(d-1)}\Theta_1+\frac{2}{d^2}\Theta_2-\frac{1}{d(d-1)}\Theta_3\,,
\ee 
we can rewrite the objective function of the maximisation problem in \eqref{max_problem} as
\bb\label{obj_function}
    \Tr\left[\TT_\GG^{\otimes k}(E)\,\,(\sigma_0^{(d)}-\sigma_1^{(d)})^{\otimes k}\right]&=\sum_{i_1,i_2,\ldots,i_k=0}^3c_{i_1,i_2,\ldots,i_k}\,(r_{d})_{i_1}(r_{d})_{i_2}\ldots (r_{d})_{i_k}\\&=\sum_{i=0}^{4^{k}-1}c_i\,(r_d^{\otimes k})_i\\&=c^\intercal r_d^{\otimes k}\,,
\ee
where we defined the vector $r_{d}\coloneqq \begin{pmatrix}
\frac{1}{d}, -\frac{1}{2}, 1-\frac{1}{d}, -\frac{1}{2}
\end{pmatrix}^\intercal$, and we used the notation $c_i=c_{i_1,i_2,\ldots,i_k}$, 
with $i\in\{0,1,\ldots,4^k-1\}$ and $i_1,i_2,\ldots,i_k\in\{0,1,2,3\}$ being related by the base-4 representation as $i= i_1\, 4^{k-1} + i_2\, 4^{k-2} + \ldots + i_{k-1}\, 4 + i_k$. Consequently, we have the PPT norm of $\rho_0^{(k,d)}-\rho_1^{(k,d)}$ can be expressed as the following linear program~\cite{KHATRI,Skrzypczyk_2023}:
\bb\label{primal_problem}
\frac{1}{2}\left\|\rho_0^{(k,d)}-\rho_1^{(k,d)}\right\|_{\ppt}= \frac{1}{2^{k-1}}&\max_{\substack{c\in[0,1]^{4^k} \\ (W_d^\intercal)^{\otimes k}\,c\in[0,1]^{4^k}}}\,  c^\intercal r_d^{\otimes k} \,.
\ee
Note that the point $c\coloneqq \frac{1}{2} \left((1,1,1,1)^\intercal\right)^{\otimes k}$ is strictly feasible, indeed $(W_d^\intercal)^{\otimes k}\,c=c\in(0,1)^{4^k}$. As a result, the linear program in \eqref{primal_problem} satisfies the Slater's condition~\cite{KHATRI}, which implies that the value of the program in \eqref{primal_problem} is equal to the value of the corresponding dual program. The latter can be found via standard methods (see e.g.~\cite{KHATRI,Skrzypczyk_2023}), and it reads:
\bb\label{dual_prog}
\frac{1}{2}\left\|\rho_0^{(k,d)}-\rho_1^{(k,d)}\right\|_{\ppt}
=\frac{1}{2^{k-1}} \inf_{\substack{ x,y,z\in\R^{4^k}_+\\  y\ge r_{d}^{\otimes k} -W_d^{\otimes k}(z-x)}} \sum_{i=0}^{4^k-1} \left[z_{i}+y_{i}\right] \,.
\ee
In a more compact form, we can write:
\bb
\frac{1}{2}\left\|\rho_0^{(k,d)}-\rho_1^{(k,d)}\right\|_{\ppt}
=\frac{1}{2^{k-1}} \inf_{\substack{ x,y,z\in\R^{4^k}_+\\  y\ge r_{d}^{\otimes k} -W_d^{\otimes k}(z-x)}} S(z+y) \,,
\ee
where we introduced the notation $S(x)$ to denote the sum of the elements of a vector $x$. For the rest of the proof, given a vector $x\in\R^{4^k}$, we will denote as $x_+$ its positive part and as $x_-$ its negative part, defined as follows:
\bb
    (x_+)_i&\coloneqq \max(0,x_i)\,,\\
    (x_-)_i&\coloneqq\max(0,-x_i)\,,
\ee
so that $x=x_+-x_-$. With this notation at hand, note that 
\bb
     \frac{1}{2}\left\|\rho_0^{(k,d)}-\rho_1^{(k,d)}\right\|_{\ppt}&\eqt{(i)}\frac{1}{2^{k-1}} \inf_{x,z\in\R^{4^k}_+} S\!\left(z+\left(r_{d}^{\otimes k}  -W_d^{\otimes k}(z-x)\right)_+\right)\\
     &\eqt{(ii)}  \frac{1}{2^{k-1}}\inf_{y\in\R^{4^k}} S\!\left(y_++\left(r_{d}^{\otimes k} -W_d^{\otimes k}y \right)_+\right)\\
    &\eqt{(iii)} 2\inf_{y\in\R^{4^k}} S\!\left(y_++\left(\bar{r}_{d}^{\otimes k} -W_d^{\otimes k}y \right)_+\right)\\
    &\eqt{(iv)} \inf_{y\in\R^{4^k}} S\!\left(y+|y|+\bar{r}_{d}^{\otimes k} -W_d^{\otimes k}y +\left|\bar{r}_{d}^{\otimes k} -W_d^{\otimes k}y \right|\right)\\
    &\eqt{(v)} \inf_{y\in\R^{4^k}} \left[ S(y)+\|y\|_1+S(\bar{r}_{d}^{\otimes k}) -S(W_d^{\otimes k}y) +\left\|\bar{r}_{d}^{\otimes k} -W_d^{\otimes k}y \right\|_1\right]\\
    &= \inf_{y\in\R^{4^k}} \left[ \|y\|_1 +\left\|\bar{r}_{d}^{\otimes k} -W_d^{\otimes k}y \right\|_1\right]\,.
\ee
Here, in (i), we used that the infimum in \eqref{dual_prog} is achieved by taking 
\bb
    y_i=\max\left(0, \big(r_{d}^{\otimes k}-W_d^{\otimes k}\,(z-x)\big)_i\right)\,\,\,\,\forall\,i\in\{0,1,\ldots,4^k-1\}\,.
\ee
Moreover, (ii) easily follows by observing that the objective function evaluated at a given pair $(z,x)\in\R^{4^k}_+\times \R^{4^k}_+$ is always greater or equal to the objective function evaluated at the pair $((z-x)_+,(z-x)_-)\in\R^{4^k}_+\times \R^{4^k}_+$. In (iii), we introduced the vector $\bar{r}_d\coloneqq \frac{1}{2}r_d=\begin{pmatrix}
        \frac{1}{2d}, -\frac{1}{4}, \frac{d-1}{2d}, -\frac{1}{4}
        \end{pmatrix}^\intercal$.
In (iv), we denoted as $|x|$ the absolute value of a vector $x$, and we observed that $2x_+=x+|x|$. In (v), we employed that $S(|\cdot|)=\|\cdot\|_1$. In (vi), we used that $S(\bar{r}_d^{\otimes k})=\left(S(\bar{r}_d)\right)^k=0$ and that $S(W_d^{\otimes k}y)=S(y)$, where the latter easily follows by observing that $\sum_{i=0}^3(W_d)_{ij}=(1,1,1,1)^\intercal$ for all $j\in\{0,1,2,3\}$. This concludes the proof.
\end{proof}
The previous lemma reduces the PPT norm between the even and odd states to a minimisation problem of manageable form. 
To bound this quantity, we need to introduce some techniques in numerical analysis. First, let us recall a known result on the \emph{Tikhonov-regularised least squares problem}~\cite{tikhonov1977solutions}. Throughout, for a vector $x\in\R^n$ we denote its Euclidean norm by $\|x\|_2 \;\coloneqq\; \Bigl(\sum_{i=1}^n x_i^2\Bigr)^{1/2}$, and for a matrix $A\in\R^{n\times n}$ we write its singular value decomposition as $A=\sum_{i=1}^n \sigma_i\, u_i v_i^\intercal$, where $(\sigma_i)_{i=1}^n$ are the singular values, while $(u_i)_{i=1}^n$ and $(v_i)_{i=1}^n$ form orthonormal bases of $\R^n$.  In particular, $u_i$ (resp.~$v_i$) is an eigenvector of $AA^\intercal$ (resp.~$A^\intercal A$) with eigenvalue $\sigma_i^2$.
\begin{lemma}[(Tikhonov-regularised least squares~\cite{tikhonov1977solutions})]\label{lemma_tik}
Let $A\in\R^{n\times n}$ and $b\in\R^n$. Then
\bb
  \inf_{x\in\R^{n}}\Bigl(\|x\|_2^2+\|b-Ax\|_2^2\Bigr)
  \;=\; \sum_{i=1}^n \frac{(u_i^\intercal b)^2}{1+\sigma_i^2},
\ee
where $A=\sum_{i=1}^n \sigma_i\, u_i v_i^\intercal$ is the singular value decomposition of $A$. 
\end{lemma}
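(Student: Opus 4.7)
The plan is to solve this convex quadratic optimisation in closed form via first-order conditions, then diagonalise everything in the SVD bases. First, I would observe that the objective $f(x) \coloneqq \|x\|_2^2 + \|b-Ax\|_2^2$ is smooth, strictly convex (its Hessian $2(I+A^\intercal A)$ is positive definite), and coercive, so the infimum is a minimum attained at the unique stationary point. Setting $\nabla f = 0$ yields the normal equations $(I+A^\intercal A)\,x = A^\intercal b$, and positive definiteness of $I+A^\intercal A$ produces the unique minimiser $x^\star = (I+A^\intercal A)^{-1} A^\intercal b$.

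The second step is to express $x^\star$ via the SVD. From $A = \sum_i \sigma_i u_i v_i^\intercal$ I would compute $A^\intercal A = \sum_i \sigma_i^2 v_i v_i^\intercal$, so $I+A^\intercal A = \sum_i (1+\sigma_i^2) v_i v_i^\intercal$, and $A^\intercal b = \sum_i \sigma_i (u_i^\intercal b)\, v_i$. Since these are simultaneously diagonal in the orthonormal basis $(v_i)$, inversion is immediate, giving the closed form $x^\star = \sum_i \frac{\sigma_i (u_i^\intercal b)}{1+\sigma_i^2}\, v_i$.

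Finally, I would substitute $x^\star$ back into $f$. Orthonormality of $(v_i)$ gives $\|x^\star\|_2^2 = \sum_i \frac{\sigma_i^2 (u_i^\intercal b)^2}{(1+\sigma_i^2)^2}$. For the residual, applying $A$ yields $A x^\star = \sum_i \frac{\sigma_i^2 (u_i^\intercal b)}{1+\sigma_i^2}\, u_i$, and since $A$ is square the family $(u_i)_{i=1}^n$ is a full orthonormal basis of $\R^n$, so $b = \sum_i (u_i^\intercal b)\,u_i$. Subtracting gives $b - A x^\star = \sum_i \frac{u_i^\intercal b}{1+\sigma_i^2}\, u_i$ and $\|b-Ax^\star\|_2^2 = \sum_i \frac{(u_i^\intercal b)^2}{(1+\sigma_i^2)^2}$. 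Adding the two contributions collapses the numerator factor $(1+\sigma_i^2)$ against one power of the denominator to yield $\sum_i \frac{(u_i^\intercal b)^2}{1+\sigma_i^2}$, as claimed.

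There is no substantive obstacle in this argument: it is the standard closed-form computation for ridge regression, and every step reduces to routine linear algebra once one works in the SVD bases. The only subtlety worth flagging is the use of squareness of $A$ to ensure completeness of $(u_i)_{i=1}^n$ in $\R^n$ (otherwise one would only have $b = \sum_i (u_i^\intercal b)\, u_i + b^\perp$ with a nonzero residual component that enters $\|b-Ax^\star\|_2^2$ independently of $x$). An equivalent, perhaps more conceptual, route is to change variables $y_i \coloneqq v_i^\intercal x$: the objective then separates into independent one-dimensional quadratics $(1+\sigma_i^2)\,y_i^2 - 2\sigma_i (u_i^\intercal b)\,y_i + (u_i^\intercal b)^2$, each minimised at $y_i^\star = \frac{\sigma_i (u_i^\intercal b)}{1+\sigma_i^2}$ with minimum value $\frac{(u_i^\intercal b)^2}{1+\sigma_i^2}$, and summing recovers the stated expression.
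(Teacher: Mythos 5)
Your proposal is correct and follows essentially the same route as the paper's own proof: set the gradient to zero, express the unique minimiser $x^\star=\sum_i \frac{\sigma_i(u_i^\intercal b)}{1+\sigma_i^2}v_i$ via the SVD, and substitute back to evaluate $\|x^\star\|_2^2$ and $\|b-Ax^\star\|_2^2$ separately. Your added remarks on coercivity, strict convexity, and the completeness of $(u_i)_{i=1}^n$ for square $A$ are sound and only make the argument more careful than the paper's sketch.
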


\begin{proof}
For completeness, we sketch the proof. Consider the objective function $f(x) \;=\; \|x\|_2^2 + \|b - Ax\|_2^2$. Differentiating with respect to $x$ and setting the gradient to zero shows that the minimiser is $\bar{x} \;=\; \sum_{i=1}^n \frac{\sigma_i}{\sigma_i^2+1}\,(u_i^\intercal b)\, v_i$. That is,  $\inf_{x\in\R^{n}}f(x) =\|\bar{x}\|_2^2 + \|b - A\bar{x}\|_2^2$. We now compute each term separately:
\bb
  \|\bar{x}\|_2^2 &= \sum_{i=1}^n \frac{\sigma_i^2}{(\sigma_i^2+1)^2}\,(u_i^\intercal b)^2\,,\\
  \|A\bar{x}-b\|_2^2
    &= \Biggl\|\sum_{i=1}^n \left(\frac{\sigma_i^2}{\sigma_i^2+1}-1\right)(u_i^\intercal b)\,u_i\Biggr\|_2^2
    = \sum_{i=1}^n \frac{1}{(\sigma_i^2+1)^2}\,(u_i^\intercal b)^2\,.
\ee
Adding the two contributions yields $\inf_{x\in\R^{n}}f(x) = \sum_{i=1}^n \frac{1}{1+\sigma_i^2}\,(u_i^\intercal b)^2$, which proves the claim.
\end{proof}

Second, we will use the celebrated \emph{Sanov's theorem}~\cite[Sec.~II.11]{CSISZAR-KOERNER}.  
Roughly speaking, Sanov's theorem quantifies how unlikely it is that the empirical distribution of i.i.d.~samples deviates significantly from the true distribution. More precisely, it shows that the probability of observing an empirical distribution inside a given set $\mathcal P$ decays exponentially fast in the number of samples, at a rate governed by the minimum relative entropy between an arbitrary distribution in $\mathcal P$ and the true distribution.
\begin{lemma}[{(Sanov's theorem~\cite[Exercise~2.12]{CSISZAR-KOERNER})}]\label{lem_sanov} 
Let $q=\{q_x\}_{x=1}^n$ be a probability distribution on an alphabet of $n$ elements $\{1,2,\ldots,n\}$, and let $X_1,\dots,X_k$ be $k$ i.i.d.~random variables drawn from $q$.  
The empirical distribution $\hat q^{(k)}$ is defined as
\bb
  \hat q^{(k)}_x \;\coloneqq\; \frac{1}{k}\,\#\{j: X_j=x\},
  \qquad x\in\{1,\dots,n\},
\ee
where $\#\{j: X_j=x\}$ denotes the number of occurrences of symbol $x$ among the $k$ samples.  
Let $\mathcal P$ be a set of probability distributions on $\{1,\dots,n\}$. Then
\bb
  \Pr\!\left[\hat q^{(k)}\in\mathcal P\right]
  \;\le\; (k+1)^n \, 2^{-k \min_{p\in\mathcal P} D(p\|q)},
\ee
where $D(p\|q) \;\coloneqq\; \sum_{x=1}^n p_x\bigl(\log_2 p_x - \log_2 q_x\bigr)$ is the relative entropy (Kullback-Leibler divergence) between $p$ and $q$.  
If the set $\mathcal P$ is convex, the prefactor can be removed, yielding the sharper bound
\bb\label{sharper_sanov}
  \Pr\!\left[\hat q^{(k)}\in\mathcal P\right]
  \;\le\; 2^{-k \min_{p\in\mathcal P} D(p\|q)}.
\ee
\end{lemma}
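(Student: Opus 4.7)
The plan is to prove Sanov's theorem via the \emph{method of types}: partition the $k$-sample space into equivalence classes according to the empirical distribution, control the probability that $q^{\otimes k}$ assigns to each class, and assemble the bound by a union over the types lying in $\mathcal{P}$.

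First I would recall two standard ingredients from the method of types. The empirical distribution of $k$ samples over an alphabet of size $n$ takes values in the set $\mathcal{T}_{k,n}$ of \emph{types} (rational distributions with denominator $k$), whose cardinality is bounded by $(k+1)^n$ since each $k p_x$ is one of $k+1$ integers in $\{0,\dots,k\}$. For a fixed type $p$, every sequence $x_1,\dots,x_k$ realising it has $q^{\otimes k}$-probability exactly $\prod_x q_x^{k p_x}=2^{-k(H(p)+D(p\|q))}$, and the number of such sequences is the multinomial coefficient $\binom{k}{k p_1,\dots,k p_n}\le 2^{k H(p)}$; multiplying the two yields $\Pr[\hat q^{(k)}=p]\le 2^{-k D(p\|q)}$. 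A union bound over $\mathcal{P}\cap\mathcal{T}_{k,n}$ then gives
\begin{equation}
\Pr[\hat q^{(k)}\in\mathcal{P}]\;\le\;\sum_{p\in\mathcal{P}\cap\mathcal{T}_{k,n}} 2^{-k D(p\|q)}\;\le\;(k+1)^n\,2^{-k\min_{p\in\mathcal{P}}D(p\|q)},
\end{equation}
establishing the first part of the lemma.

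For the sharper estimate under the convexity assumption on $\mathcal{P}$, I would replace the union bound with a change-of-measure argument centred at the information projection. Let $p^\star$ denote a minimiser of $p\mapsto D(p\|q)$ over $\overline{\mathcal{P}}$, whose existence follows from compactness of the probability simplex together with lower semicontinuity of the relative entropy. A direct computation gives
\begin{equation}
\prod_{j=1}^k \frac{q(X_j)}{p^\star(X_j)}\;=\;2^{-k\bigl(D(\hat q^{(k)}\|q)-D(\hat q^{(k)}\|p^\star)\bigr)},
\end{equation}
so tilting from $q^{\otimes k}$ to $(p^\star)^{\otimes k}$ rewrites $\Pr_q[\hat q^{(k)}\in\mathcal{P}]$ as $\E_{p^\star}\!\bigl[\mathbf{1}\{\hat q^{(k)}\in\mathcal{P}\}\,2^{-k(D(\hat q^{(k)}\|q)-D(\hat q^{(k)}\|p^\star))}\bigr]$. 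Now one invokes Csisz\'ar's \emph{Pythagorean inequality} for the I-projection onto a convex set, $D(p\|q)\ge D(p^\star\|q)+D(p\|p^\star)$ for every $p\in\overline{\mathcal{P}}$; applied to $p=\hat q^{(k)}\in\mathcal{P}$ it shows the exponent is bounded below by $D(p^\star\|q)$, whence the expectation is at most $2^{-k D(p^\star\|q)}$, which is the sharper bound.

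The main obstacle --- indeed the only nontrivial ingredient beyond routine entropy identities and the crude multinomial estimate --- is the Pythagorean inequality for I-projections. Its proof relies on the first-order optimality of $p^\star$ in the convex set $\mathcal{P}$: for any $p\in\mathcal{P}$, the path $t\mapsto (1-t)p^\star+tp$ lies in $\mathcal{P}$, so the right derivative of $t\mapsto D((1-t)p^\star+tp\,\|\,q)$ at $t=0^+$ must be nonnegative, and a short rearrangement yields the claimed inequality. Without convexity of $\mathcal{P}$ this step breaks down, which is precisely why the polynomial prefactor $(k+1)^n$ cannot be removed in general.
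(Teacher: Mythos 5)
The paper does not prove this lemma at all: it is quoted verbatim as a known result, with the reference \cite[Exercise~2.12]{CSISZAR-KOERNER} standing in for the proof. Your argument is therefore not competing with anything in the paper; judged on its own, it is a correct and complete reconstruction of the standard proof. The first part (counting types, the identity $q^{\otimes k}(\mathbf{x})=2^{-k(H(\hat q)+D(\hat q\|q))}$ for a sequence of type $\hat q$, the multinomial bound $\le 2^{kH(p)}$, and the union bound over at most $(k+1)^n$ types) is exactly the method-of-types derivation, and the removal of the polynomial prefactor for convex $\mathcal P$ via the I-projection $p^\star$ and Csisz\'ar's Pythagorean inequality $D(p\|q)\ge D(p^\star\|q)+D(p\|p^\star)$ is precisely the content of the cited exercise. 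Your derivative argument for the Pythagorean inequality is the right one. One small point worth making explicit: the change of measure from $q^{\otimes k}$ to $(p^\star)^{\otimes k}$ requires $\supp(\hat q^{(k)})\subseteq\supp(p^\star)$ on the event $\{\hat q^{(k)}\in\mathcal P\}$ (otherwise the likelihood ratio degenerates to $0\cdot\infty$); this is guaranteed by the same first-order optimality computation, since a type $p\in\mathcal P$ with $D(p\|q)<\infty$ but $\supp(p)\not\subseteq\supp(p^\star)$ would make the one-sided derivative at $p^\star$ equal to $-\infty$, contradicting minimality. Alternatively, on such sequences one can bound $D(\hat q^{(k)}\|q)\ge D(p^\star\|q)$ directly from the Pythagorean inequality without any change of measure. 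With that caveat spelled out, the proof is sound.
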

Specifically, we will use the following consequence of Sanov's theorem.
\begin{lemma}\label{conseq_sanov}
Let $q=(q_1,q_2,q_3)$ be a probability distribution with $q_2<q_3$, and consider the convex set
\bb
  \mathcal{P}\;\coloneqq\;\Bigl\{(p_1,p_2,p_3)\in\R_+^3:\; p_1+p_2+p_3=1,\;\; p_2\ge p_3\Bigr\}.
\ee
Then, for the empirical distribution $\hat{q}^{(k)}$ obtained from $k$ i.i.d.\ samples from $q$, we have
\bb
  \Pr\!\left[\hat{q}^{(k)}\in\mathcal{P}\right]
   \;\le\; \bigl(q_1+2\sqrt{q_2q_3}\bigr)^k.
\ee
\end{lemma}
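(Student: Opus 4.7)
The plan is to apply the convex version of Sanov's theorem from Lemma~\ref{lem_sanov} (the inequality~\eqref{sharper_sanov}), since $\mathcal{P}$ is clearly convex: the set of probability vectors on three symbols is a simplex, and the half-space $\{p_2\ge p_3\}$ cuts it convexly. It therefore suffices to show that
\bb\label{aim}
  \min_{p\in\mathcal{P}} D(p\|q) \;=\; -\log_2\bigl(q_1+2\sqrt{q_2q_3}\bigr),
\ee
after which the stated bound follows immediately by substitution into~\eqref{sharper_sanov}. Note that the right-hand side is positive because $q_2<q_3$ forces $q_2+q_3>2\sqrt{q_2q_3}$ (AM--GM), so $q_1+2\sqrt{q_2q_3}<1$.

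First I would localise the minimum. Since $D(\cdot\|q)$ is strictly convex and attains its global minimum $0$ at $p=q$, and since $q\notin\mathcal{P}$ because $q_2<q_3$, the constrained minimum must lie on the boundary $\{p_2=p_3\}$. (Alternatively, this follows from the KKT conditions: any interior stationary point would have to coincide with $q$.) Hence
\bb
  \min_{p\in\mathcal{P}} D(p\|q)
  \;=\; \min_{\substack{p_1,p_2\ge 0\\ p_1+2p_2=1}} \Bigl[\,p_1\log_2(p_1/q_1) \;+\; 2p_2\log_2\bigl(p_2/\sqrt{q_2q_3}\bigr)\Bigr],
\ee
where I used $p_2\log_2(p_2/q_2)+p_3\log_2(p_3/q_3)=2p_2\log_2(p_2/\sqrt{q_2q_3})$ when $p_2=p_3$.

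Second, I would rewrite the right-hand side as a relative entropy on a two-letter alphabet. Setting $a\coloneqq p_1$, $b\coloneqq 2p_2$, $Z\coloneqq q_1+2\sqrt{q_2q_3}$, $\tilde q_1\coloneqq q_1/Z$, and $\tilde q_2\coloneqq 2\sqrt{q_2q_3}/Z$, a short computation gives
\bb
  p_1\log_2(p_1/q_1) + 2p_2\log_2\bigl(p_2/\sqrt{q_2q_3}\bigr)
  \;=\; a\log_2(a/\tilde q_1) + b\log_2(b/\tilde q_2) \;-\; \log_2 Z,
\ee
because the $b\log_2 2$ term combines with $-b\log_2(2\sqrt{q_2q_3})$ to give $-b\log_2\sqrt{q_2q_3}$, and the $\log_2 Z$ factors out of both terms using $a+b=1$. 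The expression $a\log_2(a/\tilde q_1)+b\log_2(b/\tilde q_2) = D((a,b)\|(\tilde q_1,\tilde q_2))\ge 0$, with equality attained at $(a,b)=(\tilde q_1,\tilde q_2)$, which is a valid feasible choice. This establishes~\eqref{aim}.

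Finally, plugging~\eqref{aim} into~\eqref{sharper_sanov} gives
\bb
  \Pr\!\left[\hat q^{(k)}\in\mathcal{P}\right] \;\le\; 2^{-k\cdot(-\log_2(q_1+2\sqrt{q_2q_3}))} \;=\; \bigl(q_1+2\sqrt{q_2q_3}\bigr)^k,
\ee
as claimed. The only non-routine step is the reduction of the three-symbol relative entropy on the hyperplane $p_2=p_3$ to a two-symbol relative entropy against the renormalised vector $(\tilde q_1,\tilde q_2)$; everything else is a routine application of convexity and Sanov's theorem.
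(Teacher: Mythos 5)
Your proof is correct and follows essentially the same route as the paper's: apply the convex (prefactor-free) version of Sanov's theorem, localise the minimiser of $D(\cdot\|q)$ over $\mathcal{P}$ to the face $p_2=p_3$, and then rewrite the restricted divergence as a binary relative entropy against $(\tilde q_1,\tilde q_2)$ minus $\log_2\bigl(q_1+2\sqrt{q_2 q_3}\bigr)$. The only (cosmetic) difference is that you justify the localisation via strict convexity/KKT while the paper computes the derivative along the perturbation $(p_1,p_2-t,p_3+t)$ explicitly; both are valid, and your final formula even fixes a small typo in the paper's statement of the minimum value.
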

\begin{proof}
Since $\mathcal{P}$ is convex, the sharpened version of Sanov's theorem (\eqref{sharper_sanov} of Lemma~\ref{lem_sanov}) applies:
\bb\label{eq_ssss}
  \Pr\!\left[\hat{q}^{(k)}\in\mathcal{P}\right]
   \;\le\; 2^{-k \min_{p\in\mathcal{P}} D(p\|q)}.
\ee
Thus it remains to compute 
\bb
  \min_{p\in\mathcal{P}} D(p\|q)
   \;=\; \min_{\substack{p_1,p_2,p_3\ge0 \\ p_1+p_2+p_3=1 \\ p_2\ge p_3}}
          D\bigl((p_1,p_2,p_3)\,\|\,(q_1,q_2,q_3)\bigr).
\ee
To identify the minimiser, consider perturbations of the form $(p_1,p_2-t,p_3+t)$ for $t\ge0$. Differentiating with respect to $t$ at $t=0$ gives
\bb
  \frac{\mathrm d}{\mathrm dt}\,
  D\!\left((p_1,p_2-t,p_3+t)\,\big\|\,(q_1,q_2,q_3)\right)\Big|_{t=0}
   \;=\; \log_2\!\left(\frac{p_3 q_2}{p_2 q_3}\right).
\ee
Since $q_2<q_3$, this derivative is strictly negative for all $(p_1,p_2,p_3)\in\mathcal{P}$. It follows that the minimum is attained for $p_2=p_3$, so the optimisation reduces to
\bb
  \min_{p\in[0,1]} D\!\left(\left(p,\tfrac{1-p}{2},\tfrac{1-p}{2}\right)\,\Big\|\,(q_1,q_2,q_3)\right).
\ee
A direct calculation shows
\bb
  D\!\left(\left(p,\tfrac{1-p}{2},\tfrac{1-p}{2}\right)\,\Big\|\,(q_1,q_2,q_3)\right)
   \;=\; D\!\left((p,1-p)\,\Big\|\,\Bigl(\tfrac{q_1}{q_1+2\sqrt{q_2q_3}},\;\tfrac{2\sqrt{q_2q_3}}{q_1+2\sqrt{q_2q_3}}\Bigr)\right)
          \;-\;\log_2\!\bigl(q_1+2\sqrt{q_2q_3}\bigr).
\ee
The minimum value is thus $\min_{p\in\mathcal{P}}D(p\|q)=-\log_2\left(q_1+2\sqrt{q_1q_2}\right)$. Substituting back into \eqref{eq_ssss}, we conclude the proof.
\end{proof}
We are now ready to provide an explicit upper bound on the minimisation problem from Lemma~\ref{lemma_lp}.  
\begin{lemma}\label{lemma_upper_bound}
For all $d,k\in\N$ with $d\ge 2$, it holds that
\bb\label{linear_program22}
   \inf_{x\in\R^{4^k}} \Bigl( \|x\|_1 \;+\; \left\|\bar{r}_d^{\otimes k}-W_d^{\otimes k}x\right\|_1 \Bigr) 
   \;\le\; 2\,\mu_d^k\,,
\ee
where $\bar{r}_d$ and $W_d$ are defined in~\eqref{rdwd}, and $\mu_d$ is defined in~\eqref{def_mud}.
\end{lemma}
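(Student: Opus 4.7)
The strategy is to relax the $\ell_1$-$\ell_1$ minimisation to a Tikhonov-regularised least-squares problem, solve it in closed form via Lemma~\ref{lemma_tik}, and then exploit the tensor structure inherited from $W_d^{\otimes k}$ and $\bar{r}_d^{\otimes k}$.

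First I would pass from $\ell_1$ to a weighted $\ell_2$ via the Cauchy-Schwarz inequality $\|v\|_1^2\le\sum_i v_i^2/p_i$, valid for any probability distribution $p=(p_i)$. Applying it to both terms together with $(\alpha+\beta)^2\le 2(\alpha^2+\beta^2)$ gives
\begin{align*}
\bigl(\|x\|_1+\|\bar{r}_d^{\otimes k}-W_d^{\otimes k}x\|_1\bigr)^2\le 2\sum_i\frac{x_i^2+(\bar{r}_d^{\otimes k}-W_d^{\otimes k}x)_i^2}{p_i}.
\end{align*}
I would pick $p$ of tensor-product form $p=(p^{(0)})^{\otimes k}$ with $p^{(0)}$ a one-site probability on $\{0,1,2,3\}$ (to be optimised later). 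The substitution $y=P^{-1/2}x$ with $P=\diag(p)$ converts the right-hand side into the standard Tikhonov form $2\bigl(\|y\|_2^2+\|\tilde{b}-\tilde{A}y\|_2^2\bigr)$, where $\tilde{A}=\tilde{W}^{\otimes k}$, $\tilde{W}\coloneqq M_0^{-1/2}W_dM_0^{1/2}$, $\tilde{b}=(M_0^{-1/2}\bar{r}_d)^{\otimes k}$, and $M_0=\diag(p^{(0)})$.

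Applying Lemma~\ref{lemma_tik} and writing the SVD $\tilde{W}=\sum_{i=0}^3\tau_iu_iv_i^{\intercal}$ factorises the Tikhonov minimum as
\begin{align*}
\inf_y\bigl(\|y\|_2^2+\|\tilde{b}-\tilde{A}y\|_2^2\bigr)=\sum_{i_1,\ldots,i_k=0}^{3}\frac{\prod_{l=1}^k\gamma_{i_l}^2}{1+\prod_{l=1}^k\tau_{i_l}^2},
\end{align*}
where $\gamma_i\coloneqq u_i^{\intercal}M_0^{-1/2}\bar{r}_d$. To decouple this multi-index sum into a product, I would use the AM-GM inequality $1+a\ge 2\sqrt{a}$ with $a=\prod_l\tau_{i_l}^2$, yielding
\begin{align*}
\sum_{i_1,\ldots,i_k}\frac{\prod_l\gamma_{i_l}^2}{1+\prod_l\tau_{i_l}^2}\le\tfrac{1}{2}\Bigl(\sum_{i=0}^3\gamma_i^2/\tau_i\Bigr)^{\!k},
\end{align*}
so that $\inf_x(\|x\|_1+\|\bar{r}_d^{\otimes k}-W_d^{\otimes k}x\|_1)\le\bigl(\sum_i\gamma_i^2/\tau_i\bigr)^{k/2}$. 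If the AM-GM estimate turns out to be too crude, a sharper bound can be obtained by partitioning the multi-indices according to the sign of $\sum_l\log\tau_{i_l}$ and invoking Sanov's theorem (Lemma~\ref{conseq_sanov}) on each convex half; the $q_1+2\sqrt{q_2q_3}$ exponent of that lemma naturally produces square roots of the type appearing in $1-\mu_d^2$.

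The remaining task is a low-dimensional optimisation over $p^{(0)}$ that minimises $\sum_i\gamma_i^2/\tau_i$ and matches the value $\mu_d^2$. Two structural observations ease the spectral analysis of $\tilde{W}$: (i) since $\Theta_2+\Theta_3=\id-P$ is invariant under partial transpose, the vector $(0,0,1,1)^{\intercal}$ is a fixed vector of $W_d^{\intercal}$ with singular value $1$, and once the natural symmetry $p^{(0)}_2=p^{(0)}_3$ is imposed this direction decouples, reducing the non-trivial part of the SVD to a $3\times 3$ block; (ii) the identity $W_d^{\intercal}D^2W_d=D^2$ with $D=\diag(\sqrt{\Tr\Theta_i})$, which reflects preservation of the Hilbert-Schmidt inner product under partial transpose, constrains the remaining spectrum further. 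The main obstacle will be carrying out the closed-form eigen-analysis of the residual $3\times 3$ block together with the ensuing constrained optimisation over $p^{(0)}$: the nested square root and the polynomial in $1/d$ that appear in $\mu_d$ strongly suggest that the optimum is reached at a $p^{(0)}$ determined by the roots of a cubic characteristic polynomial, so the real difficulty here is algebraic rather than conceptual.
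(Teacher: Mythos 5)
Your proposal is correct in outline and shares the paper's skeleton --- relax $\ell_1$ to $\ell_2$, apply the Tikhonov formula of Lemma~\ref{lemma_tik}, and exploit the tensor factorisation of the SVD of $W_d^{\otimes k}$ --- but it diverges at the decoupling step, and there your route is actually cleaner. The paper bounds $\frac{1}{1+s^{\#_3-\#_4}}\le s^{\max(0,\#_4-\#_3)}$, symmetrises the resulting multi-index sum using $s=c_3/c_4$, and then invokes the Sanov-type Lemma~\ref{conseq_sanov} to control the probability that the empirical count of $4$'s does not exceed that of $3$'s; this costs an overall factor of $2$. Your AM--GM bound $\frac{1}{1+\prod_l\tau_{i_l}^2}\le\frac{1}{2\prod_l\tau_{i_l}}$ factorises the sum in one line, and with the singular values $1,1,\sqrt{s},1/\sqrt{s}$ it yields exactly $\frac12\bigl(c_2+c_3/\sqrt{s}+c_4\sqrt{s}\bigr)^k=\frac12\bigl(c_2+2\sqrt{c_3c_4}\bigr)^k$ via the same identity $s=c_3/c_4$ (itself a consequence of $W_d\bar r_d=\bar r_d$, i.e.\ PT-invariance of $\sigma_0^{(d)}-\sigma_1^{(d)}$), so you end up with $\mu_d^k$ rather than $2\mu_d^k$ --- a strictly better constant, with no need for the Sanov fallback you mention. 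Two caveats. First, the generalisation to non-uniform weights $p^{(0)}$ is a red herring: the uniform choice $p^{(0)}=(1/4,1/4,1/4,1/4)$ makes $\tilde W=W_d$, whose SVD is explicitly computable (singular values $1,1,\sqrt s,1/\sqrt s$ with $s$ as in the paper), and already reproduces $\sum_i\gamma_i^2/\tau_i=4\bigl(c_2+2\sqrt{c_3c_4}\bigr)=\mu_d^2$; chasing an optimal $p^{(0)}$ through a cubic characteristic polynomial would be wasted effort, and for non-uniform $M_0$ the similarity transform $M_0^{-1/2}W_dM_0^{1/2}$ does not preserve singular values, so that analysis would be genuinely messy. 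Second, you assert rather than carry out the final single-site computation; it does close, but only after establishing $s=c_3/c_4$, which is the one nontrivial algebraic fact you cannot skip. (Minor: your structural identity in observation (ii) should read $W_dD^2W_d^\intercal=D^2$, with the transpose on the right, since it comes from $\Tr[\Theta_i^\Gamma\Theta_j^\Gamma]=\Tr[\Theta_i\Theta_j]$.)
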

\begin{proof} 
    It holds that
    \bb\label{chain_ineq}           \inf_{x\in\R^{4^k}}\left[\|x\|_1+\left\|\bar{r}_{d}^{\otimes k}-W_d^{\otimes k}x  \right\|_1\right]&\le
        2^k\inf_{x\in\R^{4^k}}\left[\|x\|_2+\left\|\bar{r}_{d}^{\otimes k}-W_d^{\otimes k}x  \right\|_2\right]\\
        &\le 2^{k+\frac12}\sqrt{\inf_{x\in\R^{4^k}}\left[\|x\|_2^2+\left\|\bar{r}_{d}^{\otimes k}-W_d^{\otimes k}x  \right\|_2^2\right]}\,,
    \ee
    where both inequalities follow from  $\|y\|_1\le \sqrt{d}\|y\|_2$ for all $d\in\mathbb{N}$ and $y\in\R^d$.

    We are now going to apply Lemma~\ref{lemma_tik} with $A\coloneqq W_d^{\otimes k}$ and $b\coloneqq \bar{r}_d^{\otimes k}$. To do so, we need to find a singular value decomposition for $W_d^{\otimes k}$. By denoting as $W_d=\sum_{i=1}^4\sigma_i u_iv_i^\intercal$  a singular value decomposition for $W_d$, it follows that a singular value decomposition for $W_d^{\otimes k}$ is given by $W_d^{\otimes k}=\sum_{\textbf{i}\in\{1,2,3,4\}^k} \sigma_\textbf{i} u_\textbf{i}v_\textbf{i}^\intercal$, where we defined
    \bb
        \sigma_\textbf{i}\coloneqq \sigma_{i_1}\sigma_{i_2}\ldots\sigma_{i_k}\,,\quad
        u_\textbf{i}\coloneqq u_{i_1}\otimes u_{i_2}\ldots \otimes u_{i_k}\,,\quad
        v_\textbf{i}\coloneqq v_{i_1}\otimes v_{i_2}\ldots \otimes v_{i_k}\,.
    \ee
    By performing the singular value decomposition of $W_d$ (see the Mathematica notebook attached), we obtain that:
    \bb\label{eq_sigma_i}
        \sigma_1=1\,,\quad
        \sigma_2=1\,,\quad
        \sigma_3=\sqrt{s}\,,\quad
        \sigma_4=\frac{1}{\sqrt{s}}\,,
    \ee
    where
    \bb
        s&\coloneqq \frac{16 - 8d + 4d^2 - 2d^3 + d^4 - ( d-2) \sqrt{64 + 32d^2 + 8d^4 + d^6}}{4d^2}\,,
    \ee
    and $u_1,u_2,u_3,u_4$ are orthonormal vectors defined as
    \bb\label{def_us}
        u_1 &\coloneqq \frac{1}{\sqrt{2}} 
        \begin{pmatrix}
            0 \\ 0 \\ 1 \\ 1
        \end{pmatrix}, \\
        u_2 &\coloneqq \frac{1}{\sqrt{3 + (1 + d)^2}} 
        \begin{pmatrix}
            1 \\ d+1 \\ -1 \\ 0
        \end{pmatrix}, \\
        u_3 &\propto 
        \begin{pmatrix}
            8 + 4d + 4d^2 + d^3 + d^4 + (d+1)\sqrt{64 + 32 d^2 + 8 d^4 + d^6}   \\[8pt]
            -8 - d (4 + d^2) - \sqrt{64 + 32 d^2 + 8 d^4 + d^6} \\[8pt]
            -4d \\[8pt]
            4d
        \end{pmatrix}, \\
        u_4 &\propto 
        \begin{pmatrix}
            8 + 4d + 4d^2 + d^3 + d^4 - (d+1)\sqrt{64 + 32 d^2 + 8 d^4 + d^6}  \\[8pt]
            -8 - d (4 + d^2) + \sqrt{64 + 32 d^2 + 8 d^4 + d^6} \\[8pt]
            -4d \\[8pt]
            4d
        \end{pmatrix}.
    \ee
    In particular, it holds that  $W_dW_d^\intercal=u_1u_1^\intercal+u_2u_2^\intercal+s u_3u_3^\intercal+\frac{1}{s} u_4u_4^\intercal$. Moreover, by taking the inverse of the above equation and exploiting the fact that $W_d^{-1}=W_d$ (which simply follows from \eqref{Theta_partial_transpose} because the partial transposition is an involution), it follows that
    \begin{equation}\label{eq_WtW}
        W_d^\intercal W_d=u_1u_1^\intercal+u_2u_2^\intercal+s^{-1} u_3u_3^\intercal+s u_4u_4^\intercal\,.
    \end{equation}\
    Hence, the singular value decomposition of $W_d$ is of the form $W_d= u_1u_1^\intercal+u_2u_2^\intercal+\sqrt{s}u_3u_4^\intercal+\frac{1}{\sqrt{s}}u_4u_3^\intercal$,     and thus it holds that $v_1=u_1$, $v_2=u_2$, $v_3=u_4$, and $v_4=u_3$.  Consequently, we deduce that
\begin{align}
        &\frac12 \left\|\rho_0^{(k,d)}-\rho_1^{(k,d)}\right\|_{\ppt} \nonumber\\
        &\quad \leqt{(ii)}\   2^{k+\frac12}\sqrt{\inf_{x\in\R^{4^k}}\left[\|x\|_2^2+\left\|\bar{r}_{d}^{\otimes k}-W_d^{\otimes k}x  \right\|_2^2\right]} \nonumber\\ 
        &\quad \eqt{(iii)}\  2^{k+\frac12}\sqrt{ \sum_{\textbf{i}\in\{1,2,3,4\}^k}\frac{1}{\sigma_{\textbf{i}}^2+1}\left[(\bar{r}^{\otimes k})^\intercal u_{\textbf{i}}\right]^2  } \nonumber\\
        &\quad \eqt{(iv)}\ 2^{k+\frac12}\sqrt{ \sum_{\textbf{i}\in\{1,2,3,4\}^k}\frac{1}{s^{\#_3(\textbf{i})-\#_4(\textbf{i})}+1}\left[(\bar{r}^{\otimes k})^\intercal u_{\textbf{i}}\right]^2  } \nonumber\\
        &\quad \eqt{(v)}\ 2^{k+\frac12}\sqrt{ \sum_{\textbf{i}\in\{2,3,4\}^k}\frac{1}{s^{\#_3(\textbf{i})-\#_4(\textbf{i})}+1} c_{\textbf{i}}  } \nonumber\\
        &\quad =\ 2^{k+\frac12}\sqrt{ \sum_{\textbf{i}\in\{2,3,4\}^k}\frac{1}{s^{\#_3(\textbf{i})-\#_4(\textbf{i})}+1} c_2^{\#_2(\textbf{i})}c_3^{\#_3(\textbf{i})}c_4^{\#_4(\textbf{i})}  } \nonumber\\
        &\quad \le\ 2^{k+\frac12}\sqrt{ \sum_{\textbf{i}\in\{2,3,4\}^k}s^{\max(0,\#_4(\textbf{i})-\#_3(\textbf{i}))} c_2^{\#_2(\textbf{i})}c_3^{\#_3(\textbf{i})}c_4^{\#_4(\textbf{i})}  } \nonumber\\
        &\quad =\ 2^{k+\frac12}\sqrt{ \sum_{\substack{\textbf{i}\in\{2,3,4\}^k\\ \#_4(\textbf{i})\ge \#_3(\textbf{i}) }}c_2^{\#_2(\textbf{i})}\left(\frac{c_3}{s}\right)^{\#_3(\textbf{i})}(sc_4)^{\#_4(\textbf{i})}+\sum_{\substack{\textbf{i}\in\{2,3,4\}^k\\ \#_4(\textbf{i})<\#_3(\textbf{i})}} c_2^{\#_2(\textbf{i})}c_3^{\#_3(\textbf{i})}c_4^{\#_4(\textbf{i})}   } \label{eq_proof_main_lemma} \\
        &\quad \eqt{(vi)}\ 2^{k+\frac12}\sqrt{ \sum_{\substack{\textbf{i}\in\{2,3,4\}^k\\ \#_4(\textbf{i})\ge \#_3(\textbf{i}) }}c_2^{\#_2(\textbf{i})}c_4^{\#_3(\textbf{i})}c_3^{\#_4(\textbf{i})}+\sum_{\substack{\textbf{i}\in\{2,3,4\}^k\\ \#_4(\textbf{i})<\#_3(\textbf{i})}} c_2^{\#_2(\textbf{i})}c_3^{\#_3(\textbf{i})}c_4^{\#_4(\textbf{i})}   } \nonumber\\
        &\quad \le\ 2^{k+\frac12}\sqrt{ \sum_{\substack{\textbf{i}\in\{2,3,4\}^k\\ \#_4(\textbf{i})\ge \#_3(\textbf{i}) }}c_2^{\#_2(\textbf{i})}c_4^{\#_3(\textbf{i})}c_3^{\#_4(\textbf{i})}+\sum_{\substack{\textbf{i}\in\{2,3,4\}^k\\ \#_4(\textbf{i})\le\#_3(\textbf{i})}} c_2^{\#_2(\textbf{i})}c_3^{\#_3(\textbf{i})}c_4^{\#_4(\textbf{i})}   } \nonumber\\
        &\quad =\ 2^{k+1}\sqrt{ \sum_{\substack{\textbf{i}\in\{2,3,4\}^k\\ \#_4(\textbf{i})\le\#_3(\textbf{i})}} c_2^{\#_2(\textbf{i})}c_3^{\#_3(\textbf{i})}c_4^{\#_4(\textbf{i})}   } \nonumber\\
        &\quad =\ 2^{k+1}\sqrt{ \left( c_2+c_3+c_4\right)^k\sum_{\substack{\textbf{i}\in\{2,3,4\}^k\\ \#_4(\textbf{i})\le\#_3(\textbf{i})}} \left(\frac{c_2}{c_2+c_3+c_4}\right)^{\#_2(\textbf{i})}\left(\frac{c_3}{c_2+c_3+c_4}\right)^{\#_3(\textbf{i})}\left(\frac{c_4}{c_2+c_3+c_4}\right)^{\#_4(\textbf{i})}  } \nonumber\\
        &\quad \leqt{(vii)}\ 2\left[2\sqrt{c_2+2\sqrt{c_3c_4}}\right]^{k} \nonumber\\
        &\quad \eqt{(viii)\,}\ 2 \left(\sqrt{\,1- \frac{ \frac{5}{8}+\frac{1}{d}\left(\tfrac{1}{4}+\tfrac{2}{d}+\tfrac{9}{d^{2}}-\tfrac{6}{d^{3}} -\;\sqrt{2}\,\Bigl(\tfrac{9}{4}+\tfrac{3}{d}+\tfrac{1}{d^{2}}\Bigr) \sqrt{1-\frac{2}{d+\frac{4}{d}}}\right) }{ 1+\tfrac{2}{d}+\tfrac{4}{d^{2}} }\,}\right)^k \\
 &\quad =\ 2\mu_d^k\nonumber\,.
 \end{align}
    Here, in (ii), we exploited the inequality in~\eqref{chain_ineq}. In (iii), we used     Lemma~\ref{lemma_tik}. In (iv), we leveraged \ref{eq_sigma_i} to observe that $\sigma_{\textbf{i}}^2=s^{\#_3(\textbf{i})-\#_4(\textbf{i})}$, where we denoted as  $\#_j(\textbf{i})$ the total number of $j$'s among the elements of the string $\textbf{i}$. In (v), we defined for all $\textbf{i}\in\{2,3,4\}^k$ the quantity $c_{\textbf{i}}$ as $c_{\textbf{i}}\coloneqq c_{i_1}c_{i_2}\ldots c_{i_n}$, where
    \bb\label{def_cs}
        c_2\coloneqq (\bar{r}_d^\intercal u_1)^2+(\bar{r}_d^\intercal u_2)^2\,,\qquad
        c_3\coloneqq (\bar{r}_d^\intercal u_3)^2\,,\qquad
        c_4\coloneqq (\bar{r}_d^\intercal u_4)^2\,.
    \ee
    In (vi), we observed that $s=\frac{c_3}{c_4}$. The latter can be proved either by a direct calculation or as follows. Note that
    \bb\label{eq_for_s}
        c_2+c_3+c_4=\bar{r}_d^\intercal\bar{r}_d= \bar{r}_d^\intercal W_d^\intercal W_d\bar{r}_d=c_2+\frac{1}{s}c_3+sc_4\,,
    \ee
    where the first equality comes from \eqref{def_cs} and from the fact that $(u_1,u_2,u_3,u_4)$ are orthonormal, the second equality is a consequence of the fact that $W_d\bar{r}_d=\bar{r}_d$ (which can be proved either by a direct calculation or by exploiting \eqref{Theta_partial_transpose} together with the fact that $\sigma_0^{(d)}-\sigma_1^{(d)}$ is invariant under partial transposition, as proved in the Appendix~\ref{sec_appendix_c}), and the third equality follows by \ref{eq_WtW}. Hence, by rearranging \ref{eq_for_s}, we obtain that $s=\frac{c_3}{c_4}$.
    In (vii), we applied the consequence of the Sanov theorem stated in Lemma~\ref{conseq_sanov}. Specifically, we observed that 
    \bb
        P&\coloneqq \sum_{\substack{\textbf{i}\in\{2,3,4\}^k\\ \#_4(\textbf{i})\le\#_3(\textbf{i})}} \left(\frac{c_2}{c_2+c_3+c_4}\right)^{\#_2(\textbf{i})}\left(\frac{c_3}{c_2+c_3+c_4}\right)^{\#_3(\textbf{i})}\left(\frac{c_4}{c_2+c_3+c_4}\right)^{\#_4(\textbf{i})}\,
    \ee
    is exactly the probability that the empirical distribution $\hat{q}^{(k)}$, after $k$ samples extracted by the probability distribution $q=(q_2,q_3,q_4)$ defined as 
\bb
    q_2\coloneqq  \frac{c_2}{c_2+c_3+c_4}\,,\qquad
    q_3=\frac{c_3}{c_2+c_3+c_4}\,\qquad
    q_4=\frac{c_4}{c_2+c_3+c_4}\,,
\ee
is contained in the set of probability distributions $\mathcal{P}$ defined as
\bb
    \mathcal{P}\coloneqq \{(p_2,p_3,p_4)\in\R_+^3:\quad p_2+p_3+p_4=1,\quad p_3\ge p_4\}\,.
\ee
Hence, by employing Lemma~\ref{conseq_sanov}, it follows that 
\bb
    P\le \left(q_2+2\sqrt{q_3q_4}\right)^k = \frac{\left(c_2+\sqrt{c_3c_4}\right)^k}{\left(c_2+c_3+c_4\right)^k}\,,
\ee
which proves (vii) in \eqref{eq_proof_main_lemma}. Finally, in (viii), we explicitly calculated the term $2\sqrt{c_2+2\sqrt{c_3c_4}}$ by exploiting \eqref{def_cs} and \eqref{def_us} (see the Mathematica notebook attached). This concludes the proof.
\end{proof}

\subsection{Concluding the proof}\label{sec_proof_of_prop_main}
We are now ready to assemble the preceding lemmas and establish Proposition~\ref{main_PROP}.  
Recall that the proposition asserts that, for all $d,k\in\N$ with $d\ge 2$, the LOCC norm between the even and odd states satisfies
\bb
  \tfrac12\,\bigl\|\rho_1^{(k,d)}-\rho_0^{(k,d)}\bigr\|_{\locc}
  \;\le\; 2\,\mu_d^k,
\ee
where $\mu_d$ is defined in~\eqref{def_mud}.

\begin{proof}[Proof of Proposition~\ref{main_PROP}]
It holds that
\bb
    \frac12\,\bigl\|\rho_1^{(k,d)}-\rho_0^{(k,d)}\bigr\|_{\locc}&\leqt{(i)} \frac12\,\bigl\|\rho_1^{(k,d)}-\rho_0^{(k,d)}\bigr\|_{\ppt}\\
    &\eqt{(ii)} \inf_{x\in\R^{4^k}} \Bigl( \|x\|_1 \;+\; \left\|\bar{r}_d^{\otimes k}-W_d^{\otimes k}x\right\|_1 \Bigr)\\
    &\leqt{(iii)} 2\mu_d^k\,,
\ee
where: in (i) we employed the general fact that the LOCC norm is upper bounded by the PPT norm; in (ii)  we applied Lemma~\ref{lemma_lp}; and in (iii) we used Lemma~\ref{lemma_upper_bound}. This concludes the proof. 
\end{proof}

\section{Conclusions}\label{sec_conclusion}
In this work we resolved an open problem in the theory of quantum data hiding, specifically establishing the existence of bipartite states that are simultaneously separable, perfectly distinguishable under global operations, and yet nearly indistinguishable under LOCC measurements. In other words, we provided an explicit scheme to achieve quantum data hiding with orthogonal and separable states. Our construction proceeds in two steps: first, we identify two separable, orthogonal states that are not perfectly distinguishable under LOCC; second, we amplify their indistinguishability by considering multiple copies and applying a parity-based encoding. Concretely, we proved the existence of separable, orthogonal $\varepsilon$-quantum data hiding states on $\mathbb{C}^D\otimes \mathbb{C}^D$, where the local dimension scales as $D = O(1/\varepsilon^{10})$, while any such construction must necessarily satisfy $D = \Omega(1/\varepsilon)$.

A compelling direction for future research is to sharpen the dependence of the local dimension on $\varepsilon$, closing the gap between the current $O(1/\varepsilon^{10})$ upper bound and the $\Omega(1/\varepsilon)$ lower bound. Another natural open question is to prove or disprove Conjecture~\ref{cj1}, which would imply that the parity construction applied to any pair of states that are not perfectly distinguishable via LOCC automatically yields quantum data hiding states. Proving this conjecture would immediately provide a broad class of new examples of separable, orthogonal quantum data hiding states.

\medskip\medskip

\begin{note}
The central result of this work, the existence of perfectly orthogonal data hiding states, was announced in a seminar at the Free University of Berlin in the Summer 2023. Our explicit estimates on the local dimension required to achieve data hiding were derived at the beginning of February 2025. While writing up this paper, we became aware of~\cite{Ha2025}, whose main result is similar to ours. The proof techniques in the two papers, however, are significantly different.
\end{note}

\section{Acknowledgements}
We are deeply indebted to Bartosz Regula for suggesting the name `biblical state' for $\sigma_1^{(d)}$. FAM and LL thank the Free University of Berlin for its hospitality in 2023, during which this project was initiated. FAM and LL acknowledge financial support from the European Union (ERC StG ETQO, Grant Agreement no.\ 101165230). Views and opinions expressed are however those of the authors only and do not necessarily reflect those of the European Union or the European Research Council. Neither the European Union nor the granting authority can be held responsible for them. FAM~acknowledges financial support from the project: PRIN 2022 ``Recovering Information in Sloppy QUantum modEls (RISQUE)'', code 2022T25TR3, CUP E53D23002400006.

\bibliography{biblio}

\appendix

\section{$\GG$-twirling}\label{sec_appendix_a}
Construct the group of $d\times d$ unitaries
\bb
\GG \coloneqq \left\{ U_\pi V_\e:\ \pi \in S_d\, ,\ \e\in \{\pm 1\}^d \right\} ,
\ee
where $U_\pi \coloneqq \sum_{i=0}^{d-1}\ketbraa{\pi(i)}{i}$ implements a permutation $\pi$ in the symmetric group over $d$ elements $S_d$, and $V_\e \coloneqq \sum_{i=0}^{d-1} \e_i \ketbra{i}$ is a diagonal Hermitian unitary. Consider the $\GG$-twirling
\bb
\TT_\GG (X) \coloneqq \frac{1}{|\GG|} \sum_{U\in \GG} (U\otimes U)\, X\, (U\otimes U)^\dag\, .
\label{G_twirling2}
\ee
\begin{lemma}
An alternative expression for the $\GG$-twirling~\eqref{G_twirling} is
\bb \label{G_twirling_simplified2}
\TT_\GG(X) = \sum_{i=0}^3 \frac{\Tr[X\,\Theta_i]}{\Tr\Theta_i}\, \Theta_i\,,
\ee where $\Theta_0,\Theta_1,\Theta_2,\Theta_3$ are the four mutually orthogonal projectors in~\eqref{def_theta}.
\end{lemma}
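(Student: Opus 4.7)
The plan is to identify $\TT_\GG$ as the orthogonal projection, in the Hilbert--Schmidt inner product, onto the commutant
\[
\CC_\GG \;\coloneqq\; \bigl\{Y \in \B(\C^d \otimes \C^d) \,:\, (U \otimes U)\,Y = Y\,(U \otimes U) \ \forall\, U \in \GG\bigr\},
\]
and then to show that $\CC_\GG$ is precisely the four-dimensional span of the mutually orthogonal projectors $\Theta_0, \Theta_1, \Theta_2, \Theta_3$. The identification is standard: the averaging map $\TT_\GG$ is idempotent and self-adjoint in the HS inner product (unitary conjugation preserves HS, and $\GG$ is closed under inverses), and its range coincides with the $\GG$-invariant subspace, which is exactly $\CC_\GG$.

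First, I would check that $\{\id, F, P, \Phi\} \subseteq \CC_\GG$, from which $\Span\{\Theta_0,\Theta_1,\Theta_2,\Theta_3\} \subseteq \CC_\GG$ is immediate via \eqref{def_theta}. The swap $F$ commutes with $U \otimes U$ for every unitary $U$. The maximally entangled projector $\Phi$ satisfies $(U \otimes \bar U)\Phi(U \otimes \bar U)^\dag = \Phi$ for every unitary $U$; since every $U \in \GG$ is a \emph{real} matrix, $\bar U = U$, so $(U \otimes U)\Phi(U \otimes U)^\dag = \Phi$. Finally, $P = \sum_i \ketbra{i}\otimes\ketbra{i}$ is manifestly invariant under $V_\e \otimes V_\e$ (each $\e_i^2 = 1$ cancels) and under $U_\pi \otimes U_\pi$ (which merely permutes the vectors $\ket{ii}$).

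Next, I would establish $\dim \CC_\GG \le 4$ by a direct analysis in the computational basis. Writing $Y = \sum y_{ij,kl}\,\ketbraa{ij}{kl}$, the constraint $[V_\e \otimes V_\e, Y] = 0$ for all $\e \in \{\pm 1\}^d$ forces $\e_i \e_j \e_k \e_l = 1$ whenever $y_{ij,kl} \ne 0$, i.e.\ each index in $\{0,\ldots,d-1\}$ appears with even multiplicity in the quadruple $(i,j,k,l)$. Only four index patterns survive: (a) $i=j=k=l$; (b) $i=k,\, j=l$ with $i\ne j$; (c) $i=l,\, j=k$ with $i\ne j$; and (d) $i=j,\, k=l$ with $i\ne k$. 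Imposing also $[U_\pi\otimes U_\pi, Y] = 0$ for every $\pi \in S_d$ forces $y_{ij,kl}$ to be constant on each $S_d$-orbit of patterns, leaving exactly one scalar per pattern. Combined with the previous paragraph this yields $\dim\CC_\GG = 4$ and $\CC_\GG = \Span\{\Theta_0,\ldots,\Theta_3\}$. A cleaner alternative is the character count $\dim\CC_\GG = \tfrac{1}{|\GG|}\sum_{U\in\GG}|\Tr U|^4$, which evaluates to $4$ via $\Tr(U_\pi V_\e) = \sum_{i\in\mathrm{Fix}(\pi)}\e_i$, the Rademacher fourth-moment identity $\sum_{\e\in\{\pm1\}^f}(\sum_i \e_i)^4 = 2^f(3f^2 - 2f)$, and $\sum_\pi f(\pi) = d!$, $\sum_\pi f(\pi)^2 = 2\,d!$.

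Finally, since the $\Theta_i$ are orthogonal idempotents with $\Theta_i\Theta_j = \delta_{ij}\Theta_i$, they are HS-orthogonal with $\|\Theta_i\|_{HS}^2 = \Tr\Theta_i$, and therefore form an HS-orthogonal basis of $\CC_\GG$. The orthogonal projection onto $\CC_\GG$, which is $\TT_\GG$, then acts on any $X$ as $\sum_i \tfrac{\Tr[X\Theta_i]}{\Tr\Theta_i}\Theta_i$, yielding \eqref{G_twirling_simplified2}. The main obstacle is the dimension-$4$ characterisation of the commutant in the third paragraph; once that is in hand, everything else is bookkeeping.
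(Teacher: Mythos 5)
Your proposal is correct and follows the same overall architecture as the paper's proof: identify $\TT_\GG$ as the Hilbert--Schmidt-orthogonal projection onto the commutant of $\{U\otimes U : U\in\GG\}$, verify that the $\Theta_i$ lie in it, and pin down the dimension as $4$. The one point of divergence is how the dimension bound is obtained: the paper goes directly to the character count $\frac{1}{|\GG|}\sum_{U\in\GG}(\Tr U)^4=4$, evaluated via the fixed-point statistics of random permutations and the Rademacher fourth moment --- which is exactly the ``cleaner alternative'' you sketch in your third paragraph, with the same ingredients ($\Tr(U_\pi V_\e)=\sum_{i\in\mathrm{Fix}(\pi)}\e_i$, the $3f^2-2f$ count, and $\sum_\pi f(\pi)=d!$, $\sum_\pi f(\pi)^2=2\,d!$). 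Your primary route --- solving $[V_\e\otimes V_\e,Y]=0$ to force even index multiplicities, classifying the four surviving patterns, and then quotienting by the $S_d$-action --- is more elementary and has the advantage of exhibiting the commutant explicitly rather than merely counting its dimension, at the cost of some case bookkeeping; the character count is shorter once one trusts the combinatorial identities. Both are complete; no gap.
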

\begin{proof}
It can be easily verified that the four operators $\Theta_0,\Theta_1,\Theta_2,\Theta_3$ commute with unitaries of the form $U\otimes U$, where $U\in \GG$. It can also be checked that these are the only four linearly independent operators that have this property. Without embarking on a complicated ad hoc reasoning, there is a standard way of doing so, which is that of counting the irreps of the representation $\GG\ni U\mapsto U \otimes U$. We can do so with the theory of characters:
\begin{align}
\frac{1}{|\GG|} \sum_{U\in \GG} (\Tr U)^4 &\eqt{(i)} 
\frac{1}{d!\, 2^d} \sum_{k=0}^d \left( \frac{d!}{k!} \sum_{\ell=0}^{d-k} \frac{(-1)^\ell}{\ell!} \right) \sum_{\e\in \{\pm 1\}^d} \left(\sumno_{j=1}^k \e_j\right)^4 \nonumber \\
&\eqt{(ii)} \sum_{k=0}^d \left( \frac{1}{k!} \sum_{\ell=0}^{d-k} \frac{(-1)^\ell}{\ell!} \right) \left(3k^2-2k\right) \nonumber \\
&\eqt{(iii)} \sum_{m=0}^d \sum_{k=0}^m \frac{(-1)^{m-k}}{(m-k)!\, k!} \left(3k^2-2k\right) \nonumber \\
&= \sum_{m=0}^d \frac{1}{m!} \sum_{k=0}^m \binom{m}{k} (-1)^{m-k} \left(3k^2-2k\right) \\
&= \sum_{m=0}^d \frac{1}{m!} \sum_{k=0}^m \binom{m}{k} (-1)^{m-k} \left(3 (t\partial_t)^2 - 2 t\partial_t\right) t^k \big|_{t=1} \nonumber \\
&= \sum_{m=0}^d \frac{1}{m!} \left(3 (t\partial_t)^2 - 2 t\partial_t\right) (t-1)^m \Big|_{t=1} \nonumber \\
&= \sum_{m=0}^d \frac{1}{m!} \left(3m \delta_{m,2} + \delta_{m,1} \right) \nonumber \\
&= 4\, . \nonumber
\end{align}
Here, in~(i) we remembered that there are exactly $\frac{d!}{k!} \sum_{\ell=0}^{d-k} \frac{(-1)^\ell}{\ell!}$ permutations of $d$ elements that fix exactly $k$ arbitrary elements, in~(ii) we noticed that there are precisely $k^2 + k(k-1)2 = 3k^2-2k$ ways of picking four elements in $\{1,\ldots,k\}$ such that one can form two pairs of equal elements,\footnote{If the first two elements are equal, and there are $k$ ways this can happen, then the second must also be made of equal elements, yielding a total of $k^2$ choices. If the first two elements are different, and this can happen in $k(k-1)$ ways, then there are only two choices for the second pair.} and finally in~(iii) we introduced the new parameter $m\coloneqq k+\ell$, which ranges between $0$ and $d$.

The above calculation tells us that the four operators we have found above are the only ones that commute with all unitaries of the form $U\otimes U$, where $U\in \GG$. Hence, the $\GG$-twirling in~\eqref{G_twirling} must act as in~\eqref{G_twirling_simplified}.
\end{proof}

\section{LOCC distinguishability of the two special states}\label{sec_appendix_b}
In this section we analyse the distinguishability of the states $\sigma_0^{(d)}$ and $\sigma_1^{(d)}$ introduced in Definition~\ref{def_sigma01} under restricted classes of measurements. In particular, Proposition~\ref{locc_norm_xi2} provides an exact evaluation of both the PPT norm and the \emph{separable norm}~\cite{VV-dh} between these states, as well as upper and lower bounds on their LOCC norm. 
\begin{prop}\label{locc_norm_xi2}
We have that
\bb
\frac12 - \frac{1}{d} \leq \frac12\left\| \sigma_0^{(d)} - \sigma_1^{(d)} \right\|_\locc \leq \frac12\left\| \sigma_0^{(d)} - \sigma_1^{(d)} \right\|_\sep = \frac12 \left\| \sigma_0^{(d)} - \sigma_1^{(d)} \right\|_\ppt = \frac12 + \frac1d\, .
\ee
\end{prop}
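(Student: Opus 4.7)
The chain $\|\cdot\|_\locc \le \|\cdot\|_\sep \le \|\cdot\|_\ppt$ follows from the inclusions $\locc\subseteq\sep\subseteq\ppt$ of POVM classes. The work therefore splits into three independent bounds: an LOCC lower bound of $\tfrac12-\tfrac1d$, a SEP lower bound of $\tfrac12+\tfrac1d$, and a matching PPT upper bound of $\tfrac12+\tfrac1d$.

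For the LOCC lower bound, I would use the POVM $(P,\id-P)$ with $P=\sum_{i=0}^{d-1}|i\rangle\langle i|\otimes|i\rangle\langle i|$, the projector onto the maximally correlated subspace; this is LOCC via both parties measuring in the computational basis and comparing their outcomes classically. Since $P=\Theta_0+\Theta_1$, Definition~\ref{def_sigma01} immediately gives $\Tr[P\sigma_0^{(d)}]=\tfrac1d$ and $\Tr[P\sigma_1^{(d)}]=\tfrac12$, and hence $\tfrac12\|\sigma_0^{(d)}-\sigma_1^{(d)}\|_\locc\ge|\tfrac1d-\tfrac12|=\tfrac12-\tfrac1d$.

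The centrepiece is the SEP lower bound. The plan is to exhibit the separable POVM $(M,\id-M)$ defined by
\begin{equation*}
M\coloneqq d\,\sigma_0^{(d)}+\tfrac{d-2}{d}(\id-P).
\end{equation*}
Here $M$ is separable because $\sigma_0^{(d)}$ is separable by Lemma~\ref{lemma_sepsep} and $\id-P=\sum_{i\ne j}|i\rangle\langle i|\otimes|j\rangle\langle j|$ is a sum of product projectors. Expanding in the $\Theta$-basis yields $M=\Theta_0+\Theta_2+\tfrac{d-2}{d}\Theta_3$, and the key identity
\begin{equation*}
\id-M=\Theta_1+\tfrac{2}{d}\Theta_3=2(d-1)\,\sigma_1^{(d)}
\end{equation*}
shows that $\id-M$ is also separable (again by Lemma~\ref{lemma_sepsep}), so $(M,\id-M)$ is a valid SEP POVM with $0\le M\le\id$. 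A short $\Theta$-basis computation gives $\Tr[M\sigma_0^{(d)}]=1$ and $\Tr[M\sigma_1^{(d)}]=\tfrac{d-2}{2d}$, hence $\Tr[M(\sigma_0^{(d)}-\sigma_1^{(d)})]=\tfrac12+\tfrac1d$.

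For the matching PPT upper bound I would specialise the $\GG$-twirling reduction from the proof of Lemma~\ref{lemma_lp} to $k=1$: since both $\sigma_j^{(d)}$ are $\GG$-invariant and $\GG$-twirling commutes with partial transposition, the optimisation $\tfrac12\|\sigma_0^{(d)}-\sigma_1^{(d)}\|_\ppt$ becomes the four-variable linear program of maximising $m^\intercal r_d$ over $m\in[0,1]^4$ with $W_d^\intercal m\in[0,1]^4$, where $r_d=(\tfrac1d,-\tfrac12,1-\tfrac1d,-\tfrac12)^\intercal$ and $W_d$ is as in~\eqref{rdwd}. The SEP measurement above already supplies the primal feasible point $m^*=(1,0,1,\tfrac{d-2}{d})$ of value $\tfrac12+\tfrac1d$, and the matching upper bound follows by producing an explicit dual feasible solution of the same objective value, which complementary slackness at $m^*$ determines up to two free parameters and routine arithmetic. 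The main conceptual obstacle is spotting the SEP decomposition of $M$ and recognising its complement as a scalar multiple of $\sigma_1^{(d)}$; once these are in hand, the rest is $\Theta$-basis book-keeping and a four-variable LP duality.
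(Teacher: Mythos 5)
Your overall architecture matches the paper's: an explicit LOCC measurement for the lower bound $\tfrac12-\tfrac1d$, an explicit separable two-outcome measurement achieving $\tfrac12+\tfrac1d$, and the $k=1$ specialisation of Lemma~\ref{lemma_lp} for the matching PPT upper bound. Your LOCC step (computational-basis coincidence test, i.e.\ the POVM element $P=\Theta_0+\Theta_1$) is identical to the paper's. Your separable measurement is in fact the complement of the paper's: the paper uses $E=P-\Phi+\tfrac2d Q_-=\Theta_1+\tfrac2d\Theta_3$, which equals your $\id-M$, so the bias you extract is the same. The one place where you genuinely improve on the paper is the separability argument: the paper verifies only that $E^\Gamma\ge0$ and $(\id-E)^\Gamma\ge0$ and then cites an external reference for separability of the measurement, whereas your decomposition $M=d\,\sigma_0^{(d)}+\tfrac{d-2}{d}(\id-P)$ together with the identity $\id-M=2(d-1)\,\sigma_1^{(d)}$ reduces separability of both POVM elements to Lemma~\ref{lemma_sepsep} and the manifestly product form of $\id-P$; this is self-contained and worth keeping. (All of your $\Theta$-basis arithmetic checks out: $M=\Theta_0+\Theta_2+\tfrac{d-2}{d}\Theta_3$, $\Tr[M\sigma_0^{(d)}]=1$, $\Tr[M\sigma_1^{(d)}]=\tfrac{d-2}{2d}$.)

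The only unexecuted step is the PPT upper bound. You correctly set up the four-variable LP and supply the primal point $m^*=(1,0,1,\tfrac{d-2}{d})$ of value $\tfrac12+\tfrac1d$, but a primal feasible point only certifies a \emph{lower} bound on a maximisation; the upper bound requires the dual certificate you describe but do not produce. Since the whole point of this step is to pin the value down from above, the proof is incomplete until that certificate is written out. It is indeed routine: in the dual form of Lemma~\ref{lemma_lp}, namely $\inf_{x\in\R^4}\bigl(\|x\|_1+\|\bar r_d-W_dx\|_1\bigr)$, the explicit choice
\begin{equation*}
x_0=\Bigl(\tfrac{3d-2}{4d(d-1)},\,0,\,\tfrac{d-2}{4d},\,0\Bigr)^{\!\intercal}
\end{equation*}
(which is what the paper uses) gives $\|x_0\|_1=\tfrac{d}{4(d-1)}$ and $\bar r_d-W_dx_0=\bigl(\tfrac{d-2}{4d(d-1)},\,-\tfrac14-\tfrac1{2d},\,0,\,0\bigr)^{\!\intercal}$, whence $\|x_0\|_1+\|\bar r_d-W_dx_0\|_1=\tfrac12+\tfrac1d$ exactly, matching your primal value and closing the duality gap. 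With that certificate inserted, your proof is complete and equivalent to the paper's.
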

\begin{proof}
Setting $k=1$ in Lemma~\ref{lemma_lp} and considering the ansatz
\bb
x_0 \coloneqq \left( \frac{3d-2}{4d(d-1)},\, 0,\, \frac{d-2}{4d},\, 0 \right)^\intercal ,
\ee
we obtain that  
\bb
\frac12 \left\| \sigma_0^{(d)} - \sigma_1^{(d)} \right\|_\sep \leq \frac12 \left\| \sigma_0^{(d)} - \sigma_1^{(d)} \right\|_\ppt = \inf_{x\in \R^4} \left( \|x\|_1 + \left\| \widebar{r}_d - W_d x\right\|_1 \right) \leq \|x_0\|_1 + \left\| \widebar{r}_d - W_d x_0\right\|_1  = \frac12 + \frac1d\,, 
\ee
where the first inequality follows from the general fact that a separable measurement is also PPT~\cite{VV-dh}. For the lower bound on the separable norm, we can consider the POVM operator $E\coloneqq P - \Phi + \frac2d Q_-$, where $P$, $\Phi$, and $Q_-$ are defined in \eqref{def_orth_proj}. It turns out that $E^\Gamma \geq 0$ and $(\id - E)^\Gamma \geq 0$, so that $(E, \id-E)$ is a PPT measurement --- as a matter of fact, it is also separable~\cite[Section~4]{Park2024}. Hence,
\bb
\frac12 \left\| \sigma_0^{(d)} - \sigma_1^{(d)} \right\|_\ppt \geq \frac12 \left\| \sigma_0^{(d)} - \sigma_1^{(d)} \right\|_\sep \geq \Tr\left[ E \left(\sigma_1^{(d)} - \sigma_0^{(d)}\right)\right] = \frac12 + \frac1d\, .
\ee
Since the separable norm always upper bounds the LOCC norm~\cite{VV-dh}, the only claim that remains to be shown is the lower bound on the LOCC norm. The simple LOCC protocol of measuring both subsystems in the computational basis and checking whether the two outcomes coincide yields
\bb
\frac12\left\| \sigma_0^{(d)} - \sigma_1^{(d)} \right\|_\locc &\geq \Tr\left[ \sum_{i=0}^{d-1}\ketbra{i}\otimes\ketbra{i}\, \left(\sigma_1^{(d)} - \sigma_0^{(d)}\right) \right] = \frac12 - \frac1d\, ,
\ee
concluding the proof.
\end{proof}
\section{Invariance of the two special states under partial transposition}\label{sec_appendix_c}

\begin{lemma}
The states $\sigma_0^{(d)}$ and $\sigma_1^{(d)}$ defined in Definition~\ref{def_sigma01} are invariant under partial transposition. That is, $(\sigma_0^{(d)})^\Gamma = \sigma_0^{(d)}$ and $(\sigma_1^{(d)})^\Gamma = \sigma_1^{(d)}$.
\end{lemma}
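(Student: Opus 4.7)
My plan is to reduce the claim to the well-known action of the partial transpose on the four natural operators $\{\id, P, F, \Phi\}$ that span the commutant of $\{U\otimes U : U\in\GG\}$, and then verify invariance by a direct linear check.

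First, I would record the partial-transpose identities
\bb
\id^\Gamma = \id, \qquad P^\Gamma = P, \qquad \Phi^\Gamma = \tfrac{1}{d}F, \qquad F^\Gamma = d\,\Phi,
\ee
which follow immediately from the definitions in~\eqref{def_orth_proj}: $P$ is diagonal in the product basis, hence fixed; and $\Phi = \frac{1}{d}\sum_{i,j}\ketbraa{i}{j}\otimes\ketbraa{i}{j}$, so applying transposition on the second tensor factor exchanges $\ket{i}\bra{j} \otimes \ket{i}\bra{j}$ with $\ket{i}\bra{j}\otimes\ket{j}\bra{i}$, producing $F/d$. Conjugating back gives $F^\Gamma = d\Phi$.

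Next, I would rewrite the two special states in the $\{\id,P,F,\Phi\}$ basis. Using $\Theta_0=\Phi$, $\Theta_1=P-\Phi$, $\Theta_2=\frac{1}{2}(\id+F-2P)$, $\Theta_3=\frac{1}{2}(\id-F)$, a short algebraic manipulation yields
\bb
\sigma_0^{(d)} \;=\; \tfrac{1}{d^{2}}\bigl(\id + F - 2P + d\,\Phi\bigr),
\qquad
\sigma_1^{(d)} \;=\; \tfrac{1}{2d(d-1)}\bigl(\id - F + d\,P - d\,\Phi\bigr).
\ee
The crucial observation is that in both expressions the coefficient of $F$ equals $d$ times the coefficient of $\Phi$ (up to a fixed sign pattern that is preserved by the swap $\Phi\leftrightarrow F/d$); more precisely, both combinations lie in the subspace fixed by the involution that sends $\Phi \mapsto F/d$ and $F \mapsto d\Phi$ while leaving $\id$ and $P$ alone.

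Finally, I would simply apply $\Gamma$ termwise using the four identities above and check that the resulting expressions coincide with the originals. For $\sigma_0^{(d)}$, the transposed form is $\frac{1}{d^2}(\id+d\Phi-2P+F)$, identical to $\sigma_0^{(d)}$; for $\sigma_1^{(d)}$, it becomes $\frac{1}{2d(d-1)}(\id-d\Phi+dP-F)$, again identical to $\sigma_1^{(d)}$. This completes the proof. There is no genuine obstacle here: the argument is a two-line bookkeeping calculation once the action of $\Gamma$ on $\{\id,P,F,\Phi\}$ is recorded and the states are rewritten in that basis.
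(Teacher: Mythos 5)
Your proof is correct. You verify the same statement by a direct linear computation, but in a different basis: you expand $\sigma_0^{(d)}$ and $\sigma_1^{(d)}$ over $\{\id,P,F,\Phi\}$ and use the elementary identities $\id^\Gamma=\id$, $P^\Gamma=P$, $\Phi^\Gamma=F/d$, $F^\Gamma=d\,\Phi$, whereas the paper works in the projector basis $\{\Theta_0,\Theta_1,\Theta_2,\Theta_3\}$ and invokes the relation $\Theta_i^\Gamma=\sum_j (W_d)_{ij}\Theta_j$ with the $4\times 4$ matrix $W_d$ from~\eqref{rdwd}, then checks invariance by substitution. The two routes are linearly equivalent, but they buy different things: your version is self-contained and arguably more transparent, since the action of $\Gamma$ on $\{\id,P,F,\Phi\}$ is a standard two-element involution ($\Phi\leftrightarrow F/d$) and the invariance is visible at a glance from the matching coefficients $d\,\Phi + F$ and $-d\,\Phi - F$ in your rewritten expressions; the paper's version reuses the matrix $W_d$, which it must compute anyway for Lemma~\ref{lemma_lp}, so no extra bookkeeping is introduced there. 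I verified your intermediate expressions: indeed $\sigma_0^{(d)}=\tfrac{1}{d^2}(\id+F-2P+d\,\Phi)$ and $\sigma_1^{(d)}=\tfrac{1}{2d(d-1)}(\id-F+dP-d\,\Phi)$, and applying $\Gamma$ termwise fixes both. No gaps.
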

\begin{proof}
Recall that 
\bb
    \sigma_0^{(d)} \;=\; \tfrac{1}{d}\,\Theta_0 + \tfrac{2}{d^2}\,\Theta_2,
    \qquad
    \sigma_1^{(d)} \;=\; \tfrac{1}{2(d-1)}\,\Theta_1 + \tfrac{1}{d(d-1)}\,\Theta_3,
\ee
where the projectors $\Theta_i$ satisfy
\bb\label{eq_theta_gamma}
    \Theta_i^\Gamma \;=\; \sum_{j=0}^3 (W_d)_{ij}\,\Theta_j,
    \qquad \forall\, i\in\{0,1,2,3\},
\ee
with $W_d$ the matrix defined in~\eqref{rdwd}.  Substituting the decomposition~\eqref{eq_theta_gamma} into the expressions of $\sigma_0^{(d)}$ and $\sigma_1^{(d)}$, and using the explicit form of $W_d$, one verifies directly that
$(\sigma_0^{(d)})^\Gamma = \sigma_0^{(d)}$ and $(\sigma_1^{(d)})^\Gamma = \sigma_1^{(d)}$. This proves the claim.
\end{proof}

\end{document}